\documentclass[11pt,a4paper]{article}
\usepackage[a4paper,margin=1.1in]{geometry}
\usepackage[T1]{fontenc}
\usepackage[utf8]{inputenc}
\usepackage{lmodern}
\usepackage{microtype}
\usepackage{amsmath,amssymb,amsthm}
\usepackage{enumerate}      
\usepackage{booktabs}
\usepackage{tikz}
\usetikzlibrary{patterns,decorations.pathreplacing}
\usepackage{graphicx}
\usepackage[colorlinks=true,linkcolor=blue!60!black,citecolor=blue!60!black,urlcolor=blue!60!black]{hyperref}
\usepackage[capitalize,noabbrev]{cleveref}

\theoremstyle{plain}
\newtheorem{theorem}{Theorem}
\newtheorem{lemma}{Lemma}
\newtheorem{corollary}{Corollary}
\newtheorem{observation}{Observation}
\crefname{observation}{Observation}{Observations}
\crefname{equation}{}{}   

\newcommand{\OPT}{\operatorname{OPT}}
\newcommand{\problem}{\textnormal{\textsc{Scheduling with Mandatory Breaks}}}
\newcommand{\Ecl}{\mathcal{E}}
\newcommand{\Lcl}{\mathcal{L}}

\newcommand{\aidisclosure}{The authors used generative AI models (Claude Fable 5 and Claude Fable 5.1 by Anthropic in the Claude Code harness,
and GPT 5.6 Sol by OpenAI in the Codex harness) in the preparation of this work.
The models were used to simplify substantially an earlier
version of the hardness proof and to search for simple numerical constants realizing the variable gadget,
for literature search, and for assistance with the writing of the paper and the production of its figures.
All results and proofs were verified by the authors, who take full responsibility for the content.}

\title{Scheduling with Mandatory Breaks:\\ NP-Hardness and an Additive-One Approximation}
\author{Shreyas Ghildiyal\thanks{International Institute of Information Technology Bangalore, India.
  \texttt{shreyas.ghildiyal@iiitb.ac.in}}
  \and
  Muralidhara V.N.\thanks{International Institute of Information Technology Bangalore, India.
  \texttt{murali@iiitb.ac.in}}}
\date{}

\begin{document}
\maketitle

\begin{abstract}
In classical fixed-interval scheduling, each job of a given set must
be processed during a prescribed time interval. The goal is to assign each job to exactly 
one machine such that no two jobs assigned to the same machine overlap in their interiors, and 
the number of machines used is minimized. 
Without further constraints this is interval graph coloring and is solvable in polynomial 
time through greedy approaches. We study a variant in which every used machine must 
remain idle during a contiguous \emph{break} of prescribed length $x$ 
somewhere in the scheduling horizon. We show that this additional constraint makes 
the problem hard, except when $x\le1$. First, for every fixed $x\ge2$, deciding whether $k$ machines suffice
for the assignment of a given set of jobs is NP-complete, even when all coordinates 
are bounded linearly in the number of jobs, so machine minimization is strongly NP-hard. Second, for $x=1$, we give a
polynomial-time algorithm that solves the problem exactly. Finally, we give a
deterministic polynomial-time algorithm that, on every feasible instance, outputs a schedule using
at most $\OPT+1$ machines, where $\OPT$ is the true minimum. Unless 
$\mathrm{P}=\mathrm{NP}$, no polynomial-time algorithm guarantees $\OPT$ machines for any fixed $x\ge2$, so 
the additive guarantee of one is best possible.
\end{abstract}

\section{Introduction}\label{sec:intro}

In classical fixed-interval scheduling, each job $j$ from a given set of jobs must be processed 
during a given time interval $[l_j,r_j]$. The goal is to assign each job to exactly 
one machine (from an unlimited supply of identical machines) such that no two jobs assigned to the same machine conflict, and 
the number of machines used is minimized. Two jobs conflict precisely when their interiors
intersect, so the minimum number of machines equals the chromatic number of the interval graph induced by 
the given set of jobs, which is simply the maximum number of jobs whose interiors contain a common point.
An optimal schedule can be found using various greedy strategies in $O(n\log n)$
time~\cite{GuptaLeeLeung1979}. The model and its many variants have been studied extensively~\cite{KolenLPS2007}.

This paper studies a variant in which every machine that processes at least one job must be completely
idle during some contiguous interval of prescribed length $x$ inside the scheduling horizon. The requirement models a
mandatory rest. In railway crew rostering, for instance, a worker's weekly duties are train journeys that run at fixed
times, and every working crew member is entitled to an uninterrupted weekly rest period. Rest requirements of this kind
are standard throughout crew and personnel rostering~\cite{CapraraTothVigoFischetti1998,ErnstJKS2004,HeilHoffmannBuscher2020}.
Such scenarios map cleanly onto our problem, with train journeys being analogous to jobs, workers to machines, and a week 
to the scheduling horizon. Note that for $x=0$, the break requirement is vacuous and the problem reduces to classical 
interval scheduling.

\paragraph*{Problem definition.}
Formally, a problem instance consists of

\begin{itemize}
  \item an integer \emph{break length} $x$ where $x \ge 0$,
  \item a \emph{scheduling horizon} $[0,y]$, where $y \ge x$ is an integer, and
  \item a set of $n$ \emph{jobs} $J_j = [l_j, r_j]$ with integer endpoints $0 \le l_j < r_j \le y$.
\end{itemize}

A \emph{schedule} assigns every job to a machine so that jobs on the same machine have pairwise disjoint interiors.
Formally, no two jobs $J_a, J_b$ (with $a \neq b$) assigned to the same machine can satisfy $(l_a < r_b) \land (l_b < r_a)$. 
Note that jobs on the same machine can touch at their boundaries (e.g., $l_a < r_a = l_b < r_b$ is an allowed configuration). 
A schedule is said to be \emph{feasible} if and only if for every used machine $m$, there exists an interval 
$[b,b+x]\subseteq[0,y]$ such that for every job $J_j = [l_j, r_j]$ assigned to machine $m$, the condition 
$(r_j \leq b) \lor (b + x \leq l_j)$ holds.

The objective is to minimize the number of used machines in a feasible schedule. For a feasible instance, the minimum is denoted by $\OPT$.
The decision problem \problem{} asks, given additionally an integer $k$, whether $k$ machines suffice.

\paragraph*{Our results.}
A schedule has to balance two competing goals: using few machines, and leaving on every used machine an idle gap
long enough for the break. Our first result shows that for every break length $x\ge2$ this makes the problem
strongly NP-hard.

\begin{theorem}\label{thm:intro-hard}
For every fixed break length $x\ge2$, \problem{} is NP-complete, even when restricted to instances in which all
coordinates are bounded by a linear function of the number of jobs. In particular, minimizing the number of machines is strongly NP-hard.
\end{theorem}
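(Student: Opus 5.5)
Membership in NP is immediate: a certificate consists of the machine assignment together with a break start $b_m$ for every used machine. Both can be checked in polynomial time by testing pairwise interior-disjointness and the condition $(r_j\le b_m)\lor(b_m+x\le l_j)$. We may assume without loss of generality that $b_m$ is an integer in $[0,y-x]$, since every endpoint is an integer.

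For hardness I will reduce from a strongly NP-hard problem with a natural binary-choice structure. My first choice is 3-SAT, or more conveniently a restricted variant such as monotone or planar 3-SAT with bounded occurrences. The goal is an instance with horizon $y=O(n)$, so that strong NP-hardness follows directly.

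The key observation driving the gadgets concerns \emph{tight} instances. Choose $k$ equal to the maximum load, i.e.\ the maximum number of jobs whose interiors share a point. Arrange the jobs so that almost every unit slot $[t,t+1]$ has load exactly $k$. Then every machine is busy at every fully loaded slot, so breaks can only lie in designated low-load windows. There, the load deficit equals the number of machines allowed to rest. Concretely, I pad with long ``filler'' jobs and place, for each variable, a window of length $x$ at two alternative positions (true/false) in which exactly one machine can rest. Interval chains force the machine resting early to be a different machine from the one resting late, which encodes the choice.

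The gadgets are then built as follows.
\begin{enumerate}[(i)]
\item \emph{Variable gadget.} A small set of machines whose jobs are forced, by load and touching endpoints, to follow one of two rigid patterns. Each pattern leaves a break window at a different place and propagates the truth value along the timeline.
\item \emph{Clause gadget.} A region where load is $k$ except for a deficit that can be covered only if at least one of the three incident literal machines is ``free'' there.
\end{enumerate}
Generalising from $x=2$ to arbitrary fixed $x\ge2$ should work by stretching the construction: scale the coordinates and pad the gadgets by $x$ units. Since $x$ is fixed, the coordinates stay linear in $n$.

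For correctness, a satisfying assignment yields explicit schedules for each gadget. Conversely, a counting argument on loads shows that every machine must take its break inside some deficit window; reading off the pattern then gives a satisfying assignment.

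The main obstacle is the converse direction, in particular preventing machines from ``swapping roles'' at touching endpoints so as to cheat. Since jobs on a machine may touch, a machine can switch chains, and this can decouple the intended consistency between a variable's occurrences. I would first try to rule this out with a rigidity lemma: in any feasible $k$-machine schedule, at each load-$k$ point the multiset of machine trajectories is determined up to permutation. I would then design endpoints so that no two chains touch except where a switch is harmless.

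The case $x\le1$ being polynomial suggests that the construction genuinely needs windows of length at least $2$. The likely reason is that the break must then straddle an interior integer point, which is exactly what lets a long job block it. I will use this when placing blocker jobs of length $1$ at the midpoints of the windows.
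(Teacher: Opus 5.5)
Your NP-membership argument is fine, but the hardness part is a plan rather than a proof. No instance is actually specified (no coordinates, no $k$, no $y$), so neither direction of the equivalence can be checked. The step you flag as the main obstacle --- reading a consistent assignment off an arbitrary $k$-machine schedule --- is deferred to a ``rigidity lemma'' that you neither state precisely nor prove, and that load arguments cannot supply. Load $k$ at a point only says that every machine is busy there. Wherever jobs end and start at a common point, the machines whose jobs end can take the starting jobs in any order. For the same reason your counting step is invalid. A deficit of one throughout a window of length $x$ does not give any machine a break, since the idle slot can pass from one machine to another at an endpoint inside the window. Even when some machine does rest there, loads do not say \emph{which} one, and that is precisely the information the assignment must be read from. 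Designing the endpoints so that ``no two chains touch except where a switch is harmless'' is the whole content of the theorem, and the proposal contains no mechanism for it.

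The missing idea is to make swapping harmless instead of forbidding it. The paper takes three families of $k=3p$ jobs, each pairwise conflicting (all members contain $[0,1]$, $[6p-1,6p]$, and $[y-1,y]$, respectively). So every machine carries exactly one variable, one literal, and one clause job, and has exactly two gaps $g_1,g_2$. A prefix-counting argument then fixes the coarse assignment. The $3(i+1)$ literal jobs of variables $0,\dots,i$ start by $6i+5$, variable jobs of later variables end at $6i+7$ or later, and there are exactly $3(i+1)$ variable jobs of variables $0,\dots,i$; hence literal jobs stay on their own variable's machines. The same argument with stations $S_c=6p+3c$ keeps clause jobs on their clause's literal machines. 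Inside a variable the routing of $A_i,N_i,B_i$ is free. The offsets (variable jobs ending at $6i+1,6i+3,6i+4$, literal jobs starting at $6i+3,6i+4,6i+5$) are chosen so that, under every conflict-free routing, the literal jobs whose machines have $g_1\ge2$ form a nonempty set of occurrences of a single sign. This finite check enforces consistency, and it is why the reduction starts from SAT with every variable occurring exactly twice with one sign and once with the other. A filter starting one unit after its station ($g_2=1$), plus sinks starting two units after it, then forces the filter's machine to have $g_1\ge2$, i.e., a true literal in every clause. Finally, general $x$ is not obtained by scaling by $x$, which would turn the decisive gaps of length $1$ into gaps of length exactly $x$. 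You need a factor $s$ with $s<x\le2s$, and the paper uses $s=x-1$.
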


The threshold in \cref{thm:intro-hard} is exact. For $x=0$ the problem is classical interval scheduling, and our
second result shows that the case $x=1$, where any idle gap can host the break because all data are integral, is
polynomial-time solvable as well: the problem becomes hard precisely at $x=2$.

\begin{theorem}\label{thm:intro-x1}
There is an algorithm that, given a break length $x\le1$, a horizon $y$, and a set of $n$ jobs, either reports
that no feasible schedule exists or computes an optimal schedule, in $O(n^2\log n)$ time.
\end{theorem}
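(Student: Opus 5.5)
The plan is to reduce the case $x\le 1$ to ordinary interval colouring by adding \emph{dummy break jobs}, and then read off $\OPT$ from a counting condition.

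\textbf{Case $x=0$.} The break condition is vacuous: a degenerate interval $[b,b]$ always works. So the greedy algorithm of~\cite{GuptaLeeLeung1979} gives an optimal schedule with $\omega$ machines, where $\omega$ is the maximum number of jobs whose interiors share a common point.

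\textbf{Case $x=1$, reformulation.} All endpoints are integers. Hence on any machine every maximal idle gap inside $[0,y]$ has integral endpoints, so it contains an integral unit slot $[s,s+1]$ with $s\in\{0,\dots,y-1\}$. A machine is therefore feasible iff some integral unit slot meets no job on it in its interior.

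Fix $k$ machines. Choosing one break slot $[s_i,s_i+1]$ for each machine $i$ is the same as adding $k$ dummy jobs $[s_i,s_i+1]$ and asking for a proper $k$-colouring of jobs plus dummies in which each colour gets exactly one dummy. Each colour class of a proper $k$-colouring has pairwise interior-disjoint members. Extra unused colours can simply be left empty, and idle machines do not count. So it suffices that the augmented interval family has clique number at most $k$, and that there are exactly $k$ dummies.

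Let $d_s$ be the number of jobs containing the unit slot $[s,s+1]$; this is constant on the slot's interior. Let $m_s$ be the number of dummies placed on slot $s$. The clique condition is $m_s\le k-d_s$ for every $s$, and we need $\sum_s m_s=k$. Such $m_s$ exist iff $k\ge\omega$ and
\[
\sum_{s=0}^{y-1}(k-d_s)\ \ge\ k,\quad\text{equivalently}\quad k(y-1)\ \ge\ D:=\sum_j (r_j-l_j).
\]

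\textbf{Case $x=1$, the optimum.}
\begin{itemize}
\item If $y=1$, the instance is feasible iff there are no jobs.
\item Otherwise $\OPT=\max\bigl(\omega,\lceil D/(y-1)\rceil\bigr)$, provided some machine is actually used. The degenerate empty instance is handled separately.
\item Infeasibility can only arise through the $y=1$ case.
\end{itemize}

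\textbf{Algorithm.}
\begin{enumerate}
\item Sort the $2n$ endpoints.
\item Sweep once to obtain the $O(n)$ elementary segments between consecutive endpoints, together with the depth on each and $\omega$.
\item Compute $D$ and $k$.
\item Greedily distribute $k$ dummies: on each elementary segment, put up to $(k-d)$ per unit slot, as runs of identical unit intervals, until $k$ dummies have been placed.
\end{enumerate}
Since $D\le ny$, we have $k\le\omega+\lceil n y/(y-1)\rceil=O(n)$, so only $O(n)$ dummies are created. We then colour jobs and dummies together with the greedy interval colouring algorithm in $O(n\log n)$ time, and each colour class becomes a machine.

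This is within the claimed $O(n^2\log n)$ bound. A more naive per-$k$ test, trying each candidate $k\in[\omega,O(n)]$ with an $O(n\log n)$ check, would also fit that bound.

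\textbf{Expected main obstacle.} The delicate step is the exact correctness of the reduction. First, each colour class of the augmented colouring must receive exactly one dummy. This needs the observation that if a class gets two dummies we may move one to a class with none: all dummies with the same slot lie in distinct classes, and we recolour via a standard exchange or by re-running greedy colouring with dummies forced to distinct colours. Alternatively, we can note that Hall-type counting still holds with at most $k$ dummies. Second, we must handle the boundary cases $y=1$, no jobs, and a job equal to $[0,y]$; the last is covered by $D\ge y>y-1$ forcing more machines, which is harmless, rather than by infeasibility.
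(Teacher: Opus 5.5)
There is a genuine gap. Your reformulation for $x=1$ is sound: a machine is feasible iff some integral unit slot avoids all its jobs. So breaks can be modelled as unit dummy jobs, one per colour class, and your conditions $k\ge\omega$ and $k(y-1)\ge\sum_j(r_j-l_j)$ are necessary.

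The step that fails is ``so it suffices that the augmented family has clique number at most $k$''. Clique number at most $k$ only guarantees \emph{some} proper $k$-colouring, but you need one in which the $k$ dummies receive pairwise distinct colours. Imposing that amounts to adding edges between all dummies, which destroys the interval (perfect) structure, and no exchange argument repairs it.

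Take $y=4$ and the jobs $[0,2],[1,3],[2,4]$. Here $\omega=2$ and $\sum_j(r_j-l_j)=6=2(y-1)$, so your formula gives $\OPT=2$. For $k=2$ the only slots with spare capacity are $[0,1]$ and $[3,4]$. The conflict graph of the three jobs and these two dummies is the path $[0,1],[0,2],[1,3],[2,4],[3,4]$, and forcing the two dummies apart closes it into a $5$-cycle, which is not $2$-colourable. Directly: $[0,2]$ and $[2,4]$ both conflict with $[1,3]$, so on two machines they share a machine and tile $[0,4]$. Hence $\OPT=3$, and your greedy colouring outputs the class $\{[0,2],[2,4]\}$ with no dummy, i.e.\ a machine without a break.

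The same phenomenon refutes your treatment of a job $[0,y]$. Such a job fits on no feasible machine at all (it violates \cref{obs:feasible}), so it does not merely force more machines. For example, the single job $[0,5]$ with $y=5$ is infeasible, yet your formula returns $2$: both dummies land in the class not containing the job. Infeasibility is therefore not confined to $y=1$.

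The missing idea is that for $x=1$ the obstruction is structural, not volumetric. A machine fails exactly when a sequence of jobs, each starting where the previous one ends, tiles $[0,y]$. So what must be controlled is which touching jobs share a machine.

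The paper does this as follows:
\begin{itemize}
\item It encodes these choices as handoffs $i\to j$ with $r_i=l_j$.
\item It shows that $K$ machines suffice iff some handoff set has $\delta(t)-h_t\le K$ at every integer $t$ and no run spanning $[0,y]$.
\item It shows one may take exactly $h_t=\max(0,\delta(t)-K)$ handoffs at each $t$.
\item It decides by a maximum-flow computation whether these handoffs can be matched so that every run starting at $0$ ends before $y$.
\end{itemize}
Binary search over $K$ then yields $\OPT$ in $O(n^2\log n)$ time. Your two counting conditions give a valid lower bound, but deciding sufficiency needs a routing argument of this kind.
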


Together with \cref{thm:intro-hard}, this settles the complexity of \problem{} for every fixed break length: it is
solvable in polynomial time for $x\le1$ and NP-complete for $x\ge2$.

Our third result shows that the difficulty of \problem{} is nevertheless as small as it can be, because a feasible schedule 
that uses at most one more machine than the optimum can be computed efficiently.

\begin{theorem}\label{thm:intro-approx}
There is a deterministic polynomial-time algorithm that, given a break length $x$, a horizon $y$, and a set of jobs,
either reports that no feasible schedule exists or outputs a feasible schedule using at most $\OPT+1$ machines.
\end{theorem}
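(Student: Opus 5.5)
The plan is to reduce the problem to placing \emph{break intervals} and then use interval colouring, spending the one extra machine on repairing the colouring.

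Call $[b,b+x]$ a \emph{break interval} and treat it like a job. For an integer $k$, define the \emph{relaxed problem} $R(k)$: choose integers $b_1,\dots,b_k$ with $[b_i,b_i+x]\subseteq[0,y]$ such that every interior point $t$ lies in the interior of at most $k$ intervals of the multiset $\{J_j\}\cup\{[b_i,b_i+x]\}$.

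\textbf{Step 1 (lower bound).} An optimal schedule places one break on each of its $\OPT$ machines. The breaks and jobs of one machine have pairwise disjoint interiors, so any point is covered at most once per machine. Hence $R(\OPT)$ is solvable, and the smallest $k^*$ for which $R(k^*)$ is solvable satisfies $k^*\le\OPT$.

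\textbf{Step 2 (computing $k^*$).} For fixed $k$, $R(k)$ asks to pack $k$ length-$x$ windows under the capacity profile $c(t)=k-\mathrm{load}(t)$, where $\mathrm{load}(t)$ is the number of jobs whose interior contains $t$; we need $c\ge0$ and window coverage at most $c(t)$ pointwise. Only integer positions in $[0,y-x]$ matter, and there are polynomially many relevant ones, since we can restrict to positions where a window is pushed against a job endpoint or the horizon. The constraint matrix has consecutive ones in each column, so the LP is totally unimodular. Equivalently, it is a max-flow along the time line, with arcs $t\to t+1$ of capacity $c$ and bypass arcs $t\to t+x$ for the windows. We check whether $k$ windows fit, and binary search over $k\in[\omega,n+\omega]$, where $\omega$ is the maximum job load. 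If $R(k)$ fails for every $k\le n+\omega$, we should also check whether the instance is feasible at all. Since the instance is infeasible exactly when some job is longer than $y-x$ would permit, this is easy to test directly.

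\textbf{Step 3 (from $k^*$ to a schedule with $k^*+1$ machines).} Take a solution of $R(k^*)$ and add one more break interval at an arbitrary legal position. The resulting interval family has clique number at most $k^*+1$, so a left-to-right greedy sweep colours it with $k^*+1$ colours, i.e. machines. I will modify the greedy so that whenever a break interval starts, it is assigned preferentially to a free machine that has not yet received a break. Whenever a machine has received jobs but no break, it is flagged as needing one.

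What remains is an exchange argument. Suppose at the end some used machine $B$ has no break while another machine $A$ has two. Let $[b,b+x]$ be $A$'s second break. I would swap the suffixes of $A$ and $B$ at a time $s\in[b,b+x]$ at which $B$ has no job whose interior contains $s$. After the swap, $A$ keeps its first break and $B$ inherits $[b,b+x]$. If $B$ has a job spanning all of $[b,b+x]$, I would instead use the slack machine: route that job's chain to the extra machine, which has a break of its own. That is the reason for the $+1$.

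\textbf{Main obstacle.} The hard step is this last repair. I must show that the greedy order plus the suffix swaps terminate, and that at most one additional machine is ever needed, rather than one per deficient machine. I expect the key invariant to be the following: at every sweep time, the number of machines still lacking a break is at most the number of breaks ``in surplus'' on other machines, plus one. The lower bound $k^*\le\OPT$ from Step 1 then gives the guarantee of at most $\OPT+1$ machines.
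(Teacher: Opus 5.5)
Step 1 is correct. The real problem is worse than the repair step you flag: Step 3 is false, because $R(k)$ can lie arbitrarily far below $\OPT$. No repair with a single spare machine can therefore exist.

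\textbf{Counterexample.} Take $x=2$, $y=34$, three copies each of $[0,2]$, $[2,32]$, $[32,34]$, and one chain $[0,4],[6,10],[12,16],[18,22],[24,34]$. The load is $4$ everywhere except on the four chain gaps $[4,6],[10,12],[16,18],[22,24]$, where it is $3$. Placing one window in each gap solves $R(4)$, so $k^*=4$.

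In any schedule, however, the three copies of $[2,32]$ lie on three distinct machines. Each of these must rest at $[0,2]$ or at $[32,34]$. The point $1$ lies in the interiors of four pairwise conflicting jobs, and so does the point $33$. A machine resting at $[0,2]$ is idle at $1$, so it is not one of those four. Hence if $a$ long machines rest at $[0,2]$ and $3-a$ rest at $[32,34]$, at least $4+\max(a,3-a)\ge 6=k^*+2$ machines are used. With $a$ long jobs and a chain with $a+1$ gaps, $\OPT-k^*\ge\lceil a/2\rceil$ grows without bound.

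\textbf{Why the relaxation fails.} $R(k)$ only packs windows against aggregate free capacity. Here all the free capacity in the middle belongs to the chain machine, while the long machines can only rest where the capacity is already full. Separately, your suffix swap at $s>b$ does not hand $B$ the whole break $[b,b+x]$, since $B$'s jobs on $(b,s)$ stay on $B$.

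\textbf{The missing idea} is a lower bound that tracks, per job, on which side of its machine's break it lies. The paper does this as follows.
\begin{enumerate}[(i)]
\item Each job is labelled early or late. For a fixed labeling $z$, the normal form makes single-machine feasibility a chain condition in a partial order $\prec_z$, so the machine count is a width.
\item This width equals the maximum, over pairs $(P,Q)$ of elementary intervals whose gap is less than $x$, of $Z_z(P)+d(Q)-Z_z(Q)$. That is, early jobs through $P$ plus late jobs through $Q$.
\item Relaxing the labels to $[0,1]$ gives an LP with value $K^*\le\OPT$. On the example above this LP already certifies $K^*\ge 5.5$: with $z_L$ the total label of the long jobs, the pair $((0,2),(2,4))$ yields $K\ge 7-z_L$ and the pair $((24,32),(32,34))$ yields $K\ge 4+z_L$.
\item For rounding, the incidence matrix $\mathbf{A}$ between elementary intervals and jobs has consecutive ones in every column, so it is totally unimodular. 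This lets every early depth move by less than one, hence every constraint by less than two. Integrality of the width then gives at most $\OPT+1$ machines.
\end{enumerate}
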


A polynomial-time algorithm that always returns an optimal schedule would in particular decide \problem{},
which Theorem~\ref{thm:intro-hard} rules out for every fixed $x\ge2$ unless $\mathrm{P}=\mathrm{NP}$. Since the
number of machines is an integer, the smallest additive guarantee still achievable in polynomial time is one, and
Theorem~\ref{thm:intro-approx} achieves it.

\begin{corollary}\label{cor:best-possible}
Unless $\mathrm{P}=\mathrm{NP}$, the additive guarantee of Theorem~\ref{thm:intro-approx} is the best possible:
for every fixed $x\ge2$, no polynomial-time algorithm outputs a schedule using $\OPT$ machines on every
feasible instance of \problem{}.
\end{corollary}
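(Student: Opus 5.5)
The corollary is a contradiction argument that combines Theorem~\ref{thm:intro-hard} with the fact that Theorem~\ref{thm:intro-approx} already handles infeasible instances. Fix $x\ge2$ and suppose some polynomial-time algorithm $\mathcal{A}$ outputs a feasible schedule with exactly $\OPT$ machines on every feasible instance. Using $\mathcal{A}$, I will decide \problem{} in polynomial time. By Theorem~\ref{thm:intro-hard} this gives $\mathrm{P}=\mathrm{NP}$.

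Take an input $(x,y,J,k)$ of the decision problem. First run the algorithm of Theorem~\ref{thm:intro-approx}. If it reports that no feasible schedule exists, answer ``no''. Otherwise the instance is feasible. Run $\mathcal{A}$ on it, which returns a schedule with $\OPT$ machines. Count these machines and answer ``yes'' exactly when the count is at most $k$.

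There is a subtle point in the ``yes'' direction, which is the main thing to check. A feasible schedule with $m\le k$ machines does not literally use $k$ machines. So I must fix the reading of ``$k$ machines suffice'' as ``at most $k$ used machines''. Under that reading the answer is correct: $k$ machines suffice if and only if $\OPT\le k$. This is the standard convention for machine minimization, and it is the one under which Theorem~\ref{thm:intro-hard} is stated.

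Two details need care. First, $\mathcal{A}$ is only promised to behave on feasible instances, and the feasibility pre-check is what lets us apply it safely. Second, even without the pre-check, we could simply verify whatever $\mathcal{A}$ outputs, checking in polynomial time that it is a feasible schedule with at most $k$ machines. The total running time is polynomial, so $\mathrm{P}=\mathrm{NP}$.

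For the ``best possible'' phrasing, I will also note that an additive guarantee $c<1$ for integer-valued outputs means at most $\OPT$ machines. That is exactly the case ruled out above, so additive one is optimal.
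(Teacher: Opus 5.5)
Your proposal is correct and follows the paper's argument: an algorithm that always returns an optimal schedule would decide \problem{}, contradicting Theorem~\ref{thm:intro-hard} unless $\mathrm{P}=\mathrm{NP}$, and since machine counts are integers, one is the smallest additive guarantee left to achieve. Your feasibility pre-check, which could equally be done directly via Observation~\ref{obs:feasible}, and your ``at most $k$'' reading spell out details that the paper leaves implicit.
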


\subsection{Related work}\label{sec:related}

Fixed-interval scheduling with machine minimization is solvable in polynomial time, and several weighted and
cost-based variants, such as selecting a maximum-value subset of jobs for a given number of machines, are
solvable in polynomial time as well~\cite{ArkinSilverberg1987,KolenLPS2007,KovalyovNgCheng2007}. Adding a per-machine resource constraint changes
this: Fischetti, Martello, and Toth proved NP-hardness when each machine's \emph{spread time} --- the span from its
first start to its last finish --- is bounded~\cite{FischettiMartelloToth1987}, and likewise when its total
\emph{working time} is bounded~\cite{FischettiMartelloToth1989}, later complementing both results with fast
approximation algorithms and worst-case analyses~\cite{FischettiMartelloToth1992}. In the same family,
Osorio-Valenzuela et al.\ minimize the number of machines when each machine has a limited workload
capacity~\cite{OsorioValenzuelaEtAl2019}, and the shift minimization personnel task scheduling problem assigns fixed
tasks to a minimum number of workers who come with predetermined shifts and
skills~\cite{KrishnamoorthyErnstBaatar2012}. Our break constraint is of a different kind, as it demands a contiguous idle
interval whose position is chosen by the schedule, separately on every used machine. The two models have one nontrivial point of coincidence. With integral data,
a machine has an idle gap of length one precisely when its total working time is at most $y-1$, so for $x=1$ our
problem is the working-time problem of Fischetti, Martello, and Toth~\cite{FischettiMartelloToth1989} with
per-machine limit $y-1$, the largest limit that constrains anything. That problem is NP-hard for general limits,
and \cref{thm:intro-x1} shows that this boundary case is polynomial. For $x\ge2$, in contrast, the feasibility
of a machine is not determined by its total workload alone, as a machine may be idle for most of the horizon and
still have no idle gap of length $x$.

Breaks themselves have a long history in personnel scheduling, from implicit models of flexible meal breaks in shift
scheduling~\cite{BechtoldJacobs1990} to the break scheduling problem of Widl and Musliu~\cite{WidlMusliu2014}, in
which breaks are inserted into an already fixed shift plan so that the number of employees on duty tracks a given
staffing curve. In those models the demand is an aggregate headcount over time. In ours it is a set of indivisible
fixed jobs, and the number of machines is the objective. A second adjacent line of work makes machines temporarily
unavailable. In interval scheduling with machine availabilities~\cite{AnapolskaEtAl2024}, each machine's
availability interval is given in the input (in the flexible variant, only the assignment of the given intervals to
machines is free), and the question is whether all jobs can be scheduled. Jaykrishnan and
Levin~\cite{JaykrishnanLevin2024} study machines that must go down after processing a prescribed number of jobs,
including a variant that minimizes the number of machines, but their jobs are movable, so the problem is one of
packing rather than of fixed intervals. In aircraft maintenance routing~\cite{GopalanTalluri1998}, a fleet of
aircraft flies fixed flight legs and every aircraft must regularly reach a maintenance station, but the maintenance
opportunities are tied to given locations and times, and the number of aircraft is given rather than minimized. Our model differs from all of
these in the same way: the jobs are fixed intervals, the number of machines is the objective, and each machine's
idle interval is positioned freely by the schedule, subject only to having the prescribed length. To the best of
our knowledge, this combination has not been studied before.

The sensitivity of interval problems to seemingly small structural perturbations is well documented. A classical
example is that coloring circular-arc graphs, unlike interval graphs, is NP-complete~\cite{GJMP1980}. For general background on NP-completeness we refer to Garey and
Johnson~\cite{GareyJohnson1979}.


\subsection{Technical overview}\label{sec:overview}


\paragraph*{Hardness.}

The reduction is from a restriction of satisfiability in which every variable occurs exactly three times, twice
with one sign and once with the other, and every clause has two or three literals. It is NP-complete by a standard
variable-splitting argument in the spirit of Tovey~\cite{Tovey1984}.

We first show NP-completeness for the case where the break length $x$ is equal to $2$. The reduction then extends
to every fixed $x\ge2$ by scaling all coordinates.
From an arbitrary instance of the aforementioned restriction of satisfiability with $k$ literal occurrences in total, we construct an instance of \problem{} with $x = 2$ and three families of
jobs --- \emph{variable jobs}, \emph{literal jobs}, and \emph{clause jobs} --- with each family consisting of $k$ jobs 
that pairwise conflict within their family. 
These jobs are constructed in a manner such that a schedule on fewer than $k$ machines is impossible, and a schedule
on $k$ machines exists if and only if a solution to the restricted satisfiability instance exists.

\paragraph*{Break length one.}
For $x=1$ the break constraint is volumetric: since all data are integral, a machine has an idle gap of length one
precisely when its jobs do not tile the entire horizon, and the only way to tile the horizon is a sequence of
jobs, each ending where the next one starts, all on one machine. Call it a \emph{handoff} when two touching jobs
are placed consecutively on one machine. A handoff saves a machine at the point where the two jobs touch, and a counting argument shows
that a budget of $K$ machines forces a prescribed number of handoffs at every point, while additional handoffs
never help. What remains is to choose which ending job hands off to which starting job so that no sequence of
handoffs runs from $0$ to $y$.
Following the jobs that start at $0$ through the handoffs, this is a maximum-flow problem, and a binary search over
$K$ finds the optimum.

\paragraph*{The additive-one algorithm.}
The algorithm rests on a normal form: we view the time horizon of every machine in terms of the prefix before its 
break, and the suffix after it. Every job on the machine accordingly receives a \emph{label}: \emph{early} if it lies in the prefix, \emph{late}
if it lies in the suffix. If we knew, for every job, whether it is early or late in some optimal 
schedule, then it is possible to recover an optimal schedule in polynomial time --- one may construct a partial order 
on the jobs such that choosing a chain in this partial order corresponds to choosing a set of jobs that may all be assigned 
to a single machine. Finding the minimum number of machines then reduces to finding the minimum chain cover on this poset.
The standard proof of Dilworth's theorem in turn reduces minimum chain cover to bipartite matching, firmly placing the subproblem in P.
The entire difficulty of the problem therefore lies in choosing the labels. Recording them as $0/1$ values, we show
that the number of machines needed under a given labeling admits a closed formula: a maximum of $O(n^2)$
expressions, each linear in the labels. Minimizing this maximum over fractional labels is a linear program, and its optimum is at most $\OPT$.
Minimizing over \emph{integral} labels is what makes the problem hard. We then exploit the structure of
the constraints: the labels enter them only through their sums over elementary intervals, and the matrix of these
sums has consecutive ones in every column. Such matrices are totally unimodular, and this allows an optimal
fractional solution (which can be found in polynomial time) to be rounded to integral labels while changing every
sum by less than one, and hence every constraint by less than two. Since machine counts
are integers, at most one extra machine is incurred.

\section{Preliminaries}\label{sec:prelim}


Throughout, we assume $x\ge1$, since for $x=0$ the problem is classical interval scheduling, which the greedy
algorithm solves exactly. The number
of jobs in an instance of \problem{} is denoted by $n$. 
A machine is \emph{used} if at least one job is assigned to it.

\begin{observation}\label{obs:np}
\problem{} is in NP.
\end{observation}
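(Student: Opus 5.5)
The certificate will be a schedule, i.e.\ an assignment of every job to one of the machines $1,\dots,k$, together with a break start $b_m$ for every used machine $m$. The verifier checks three things: that the number of machines used is at most $k$; that no two jobs on the same machine conflict, i.e.\ that no $a\neq b$ on a common machine satisfy $(l_a<r_b)\land(l_b<r_a)$; and that for every used machine $m$ we have $0\le b_m$, $b_m+x\le y$, and every job $J_j$ on $m$ satisfies $(r_j\le b_m)\lor(b_m+x\le l_j)$. This is the definition of feasibility verbatim, so an instance is a yes-instance exactly when such a certificate exists.

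First, $k$ may be assumed to be at most $n$. Any feasible schedule uses at most $n$ machines, so if $k>n$ we replace it by $n$. The assignment therefore has size $O(n\log n)$ bits.

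The only point needing care is that the break positions $b_m$ are real numbers and need not be small a priori. I will show that they may be chosen from the set $\{0\}\cup\{r_j\}$. Take any valid $b$ for machine $m$ and slide it left until it reaches either $0$ or the right endpoint $r_j$ of some job on $m$ with $r_j\le b$. Call the resulting point $b'$. Then $[b',b'+x]\subseteq[0,y]$ still holds. Every job with $r_j\le b$ still lies to the left, because $b'$ is the maximum of such right endpoints (or $0$). Every job with $b+x\le l_j$ still lies to the right, because $b'\le b$.

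Hence each $b_m$ is one of the input integers, of polynomial size. Alternatively, the verifier can dispense with the $b_m$ entirely and try all $O(n)$ candidates for each machine. Either way verification takes $O(n^2)$ time, which proves the observation. The sliding step is the only nontrivial part, and it is routine.
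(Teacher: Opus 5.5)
Your proof is correct and takes the paper's approach: the schedule itself is the certificate, and each machine is checked in polynomial time. Your sliding argument, which restricts break starts to $\{0\}\cup\{r_j\}$, just spells out why the paper's shortcut works, namely checking that each machine's longest idle gap (including the gaps before the first and after the last job) has length at least $x$.
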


\begin{proof}
A schedule is a polynomial-size certificate that can be verified in polynomial time: for each machine, sort its jobs 
by start time and check that consecutive jobs do not conflict and that the longest idle interval --- including the gap before the 
first job and the gap after the last one --- has length at least $x$.
\end{proof}

The following characterization shows that feasibility itself is easy to test, and the hardness of \problem{} lies
entirely in the number of machines.

\begin{observation}\label{obs:feasible}
An instance admits a feasible schedule on some number of machines if and only if every job $j$ satisfies
$r_j\le y-x$ or $l_j\ge x$.
\end{observation}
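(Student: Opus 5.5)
The plan is to prove the two directions separately, and in both the witness is the simplest possible schedule: one machine per job.

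For necessity, I will assume a feasible schedule exists and fix an arbitrary job $j$. It sits on some used machine $m$, and that machine has a break $[b,b+x]\subseteq[0,y]$. Feasibility requires either $r_j\le b$ or $b+x\le l_j$. The containment $[b,b+x]\subseteq[0,y]$ gives $0\le b$ and $b+x\le y$, that is, $b\le y-x$ and $b+x\ge x$. In the first case this yields $r_j\le b\le y-x$. In the second case it yields $l_j\ge b+x\ge x$. So every job satisfies $r_j\le y-x$ or $l_j\ge x$.

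For sufficiency, I will assume every job satisfies the condition and put each job on its own machine. No machine then holds two jobs, so the non-conflict requirement holds trivially. For the break on the machine of job $j$, I choose it as follows.
\begin{itemize}
\item If $r_j\le y-x$, take $b=y-x$. Then $[b,b+x]=[y-x,y]\subseteq[0,y]$, because $y\ge x$, and $r_j\le b$.
\item Otherwise $l_j\ge x$, so take $b=0$. Then $[0,x]\subseteq[0,y]$ and $b+x=x\le l_j$.
\end{itemize}
Either way each used machine has a valid break, and the schedule on $n$ machines is feasible.

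No step here is a real obstacle; the argument is essentially the definitions. The one point to check is that the breaks at the ends of the horizon, $[0,x]$ and $[y-x,y]$, actually lie inside $[0,y]$. This uses the instance assumption $y\ge x$. The same argument also shows that feasibility can be tested in linear time, which is the remark the paper makes.
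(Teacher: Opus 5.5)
Your proof is correct and follows the paper's argument: necessity comes from $r_j\le b\le y-x$ or $l_j\ge b+x\ge x$, and sufficiency from one machine per job with break $[y-x,y]$ or $[0,x]$. Working directly from the feasibility condition $(r_j\le b)\lor(b+x\le l_j)$ also lets you skip the paper's remark that the break has positive length, so your argument covers $x=0$ as well.
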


\begin{proof}
A break has positive length, so every job of a machine lies entirely before or entirely after the break: jobs
before it end by $b\le y-x$, and jobs after it start at $b+x\ge x$. A job violating both conditions therefore fits
on no machine. Conversely, if every job satisfies one of the conditions, placing each job on a private machine is
feasible (a machine whose job ends by $y-x$ takes the break $[y-x,y]$, and any other machine takes $[0,x]$).
\end{proof}

Finally, sort the distinct job endpoints together with $0$ and $y$. We define each open interval between consecutive
values in this sorted order to be an \emph{elementary interval}. There are at most $2n+1$ of them.
The \emph{depth} $d(P)$ of an elementary interval $P$ is the number of jobs $j$ with $P\subseteq(l_j,r_j)$. The maximum 
depth, denoted $D$, is the largest number of jobs whose interiors contain a common point, and also the minimum number of machines when no breaks are
required (classical interval scheduling).

\section{NP-completeness}\label{sec:hardness}


In this section we prove \cref{thm:intro-hard}. Membership in NP is \cref{obs:np}. We establish hardness by a
reduction from the following restriction of satisfiability, which is NP-complete by a standard variable-splitting
argument in the spirit of Tovey~\cite{Tovey1984}.

\begin{lemma}\label{lem:sat}
SAT remains NP-complete for formulas whose clauses contain two or three literal occurrences and in which every variable
occurs exactly three times, twice with one sign and once with the other.
\end{lemma}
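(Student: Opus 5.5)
Membership in NP is immediate, since a truth assignment is a certificate checkable in linear time. For hardness I would reduce from 3-SAT via Tovey's variable-splitting construction, adjusting it so that every variable occurs \emph{exactly} three times with the sign pattern two-and-one.

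\textbf{Preprocessing.} Start from a 3-SAT formula $\varphi$ in which every clause has two or three literals. First delete any clause containing both $v$ and $\neg v$, since it is always satisfied. Next, a variable that occurs only with one sign is pure: set it to satisfy its clauses and delete those clauses. Repeat until neither rule applies. After this, every remaining variable $v$ occurs $t_v\ge2$ times, with both signs present.

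\textbf{Splitting.} Fix a variable $v$ with occurrences $o_1,\dots,o_t$. Introduce fresh variables $v_1,\dots,v_t$ and replace occurrence $o_i$ by $v_i$ with the original sign. Then add the cyclic implication clauses $(\neg v_i\vee v_{i+1})$ for $i=1,\dots,t$, with indices taken mod $t$. Each of these has two literals. The cycle forces $v_1=\dots=v_t$ in every satisfying assignment, so the new formula is equisatisfiable with $\varphi$: one direction copies the value of $v$ to all copies, and the other reads $v$ off the common value.

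Now count occurrences. Each $v_i$ appears once in an original clause, once positively in $(\neg v_{i-1}\vee v_i)$, and once negatively in $(\neg v_i\vee v_{i+1})$. So each $v_i$ occurs exactly three times. If the original occurrence is positive, $v_i$ appears twice positive and once negative; otherwise it appears once positive and twice negative. Either way the pattern is "twice with one sign, once with the other". The case $t=1$ cannot occur after preprocessing, which matters because the cycle would degenerate into the clause $(\neg v_1\vee v_1)$. For $t\ge2$ the implication clauses are genuine two-literal clauses on distinct variables.

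The original clauses keep their sizes of two or three. A unit clause in $\varphi$ can be removed beforehand by unit propagation, which is polynomial. The alternative is to pad the unit clause $(\ell)$ into $(\ell\vee z)$, $(\ell\vee\neg z)$, but that changes the occurrence counts of $\ell$'s variable, so the simplest order is: propagate units, then apply the pure-literal step, then split. The empty clause makes the formula trivially unsatisfiable and can be mapped to a fixed unsatisfiable instance of the restricted form.

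\textbf{Main obstacle.} The only delicate point is making sure the sign pattern is exactly $2+1$ rather than $3+0$. The cyclic construction handles this automatically, because every copy gets one positive and one negative occurrence from the cycle. It also requires the preprocessing to guarantee $t_v\ge2$ and to remove tautological and unit clauses. The whole transformation runs in polynomial time, which gives NP-completeness.
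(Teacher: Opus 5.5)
Your proof is correct and is essentially the paper's: both use the same Tovey-style splitting with the cyclic implication clauses $(\neg v_i\lor v_{i+1})$, and the same occurrence count gives the two-and-one sign pattern. The one difference is your preprocessing to rule out $t=1$, which the paper does without by noting that the degenerate clause $(\neg v_1\lor v_1)$ is harmless: it has two literal occurrences and gives $v_1$ one occurrence of each sign, so the required pattern holds anyway.
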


\begin{proof}
Take the standard NP-complete restriction of \textsc{3-Sat} in which every clause has exactly three literal
occurrences~\cite{GareyJohnson1979}. For a variable $z$ with occurrences
$1,\dots,t$, replace the $i$th occurrence by a fresh variable $z_i$ of the same sign and add the clauses
$(\neg z_i\lor z_{i+1})$ for $i=1,\dots,t$, indices modulo $t$. These clauses form a cycle of implications, so in
any satisfying assignment all copies of $z$ take the same value, and satisfiability is preserved. Each copy $z_i$
occurs once in an original clause and once with each sign in the cycle, which is the required pattern. For $t=1$
the added clause is the tautological $(\neg z_1\lor z_1)$, which is harmless, and variables without occurrences are
discarded. A clause may then contain both
signs of a variable, which the construction below, being defined per occurrence, tolerates.
\end{proof}

Fix such a formula $\varphi$ with variables $i=0,\dots,p-1$ and clauses $c=0,\dots,q-1$, where clause $c$ has
$s_c\in\{2,3\}$ literals. Counting literal occurrences in two ways gives $\sum_c s_c=3p$.

\subsection{The reduction}\label{sec:construction}

Given $\varphi$ (we may assume $p\ge1$), we construct an instance of \problem{} with $n=9p$ jobs. Set
\[
x=2,\qquad k=3p,\qquad y=6p+3q .
\]

We also define $S_c=6p+3c\quad(0\le c<q)$ as the \emph{station} of clause $c$. The jobs for this instance will come in three families of $3p$ jobs each.

\begin{itemize}
\item\textbf{Variable jobs.} Each variable $i$ contributes the three jobs
\[
[0,\,6i+1],\qquad[0,\,6i+3],\qquad[0,\,6i+4].
\]
\item\textbf{Literal jobs.} Consider variable $i$, and let its two occurrences of the repeated sign lie in
clauses $a_i$ and $b_i$ and its remaining occurrence in clause $e_i$. The variable contributes one job per occurrence,
\[
A_i=[6i+3,\;S_{a_i}],\qquad N_i=[6i+4,\;S_{e_i}],\qquad B_i=[6i+5,\;S_{b_i}],
\]
so that $A_i$ and $B_i$ represent the repeated sign and $N_i$ the opposite sign. Each literal job runs from its
variable's region to the station of its clause.
\item\textbf{Clause jobs.} Each clause $c$ contributes one \emph{filter} $F_c=[S_c+1,\,y]$ and $s_c-1$ identical
\emph{sinks} $[S_c+2,\,y]$.
\end{itemize}

The construction is clearly computable in polynomial time, and all coordinates are at most $y=6p+3q\le10.5\,p$,
since every clause has at least two literals. We claim that $\varphi$ is satisfiable if and only if the
constructed jobs can be feasibly scheduled on $k$ machines. Further, it is impossible to create a feasible schedule with
fewer than $k$ machines.

Before proving this, we outline the intended mechanism. Every machine works exactly one job from each family,
and we will show that the three jobs on a machine are forced to be \emph{compatible}: the literal job belongs to
the variable of the variable job, and the clause job to the clause of the literal job. A machine thus stands for
one literal occurrence of $\varphi$. Every machine is busy at time $0$ and at time $y$, so its only gaps lie
between its variable job and its literal job, or between its literal job and its clause job, and the offsets in
the construction decide which of these gaps are long enough for a break. This is what makes the value assignments
of a variable consistent across its occurrences. \cref{fig:example} illustrates the construction on a concrete
formula.

\begin{figure}[t]
\centering
\begin{tikzpicture}[xscale=0.58,yscale=0.55]
\newcommand{\vjob}[3]{\fill[black!15] (#1,#3) rectangle (#2,#3+0.62);}
\newcommand{\ljob}[4]{\fill[blue!20] (#1,#3) rectangle (#2,#3+0.62);
  \node[font=\scriptsize] at ({(#1+#2)/2},{#3+0.31}) {#4};}
\newcommand{\cjob}[4]{\fill[orange!30] (#1,#3) rectangle (#2,#3+0.62);
  \node[font=\scriptsize] at ({(#1+#2)/2},{#3+0.31}) {#4};}
\newcommand{\brk}[2]{\fill[pattern=north east lines,pattern color=teal] (#1,#2+0.05) rectangle (#1+2,#2+0.57);}
\foreach \s/\l in {12/{$S_0$},15/{$S_1$},18/{$S_2$}}
  {\draw[dashed,black!35] (\s,-0.6) -- (\s,7.4);
   \node[font=\scriptsize] at (\s,7.75) {\l};}
\draw[->] (-0.3,-0.6) -- (21.8,-0.6);
\foreach \t in {0,3,6,9,12,15,18,21}
  {\draw (\t,-0.72) -- (\t,-0.48); \node[below,font=\scriptsize] at (\t,-0.68) {$\t$};}
\vjob{0}{1}{6.6}  \brk{1}{6.6}  \ljob{3}{12}{6.6}{$A_u$}  \cjob{13}{21}{6.6}{$F_0$}
\vjob{0}{3}{5.5}  \brk{3}{5.5}  \ljob{5}{18}{5.5}{$B_u$}  \cjob{19}{21}{5.5}{$F_2$}
\vjob{0}{4}{4.4}  \ljob{4}{15}{4.4}{$N_u$}  \brk{15}{4.4}  \cjob{17}{21}{4.4}{}
\vjob{0}{7}{3.0}  \brk{7}{3.0}  \ljob{10}{15}{3.0}{$N_v$}  \cjob{16}{21}{3.0}{$F_1$}
\vjob{0}{9}{1.9}  \ljob{9}{12}{1.9}{$A_v$}  \brk{12}{1.9}  \cjob{14}{21}{1.9}{}
\vjob{0}{10}{0.8} \ljob{11}{18}{0.8}{$B_v$} \brk{18}{0.8}  \cjob{20}{21}{0.8}{}
\draw[decorate,decoration={brace}] (-0.55,7.3) -- (-0.55,4.35);
\node[font=\small,left] at (-0.75,5.85) {$u$};
\draw[decorate,decoration={brace}] (-0.55,3.7) -- (-0.55,0.75);
\node[font=\small,left] at (-0.75,2.25) {$v$};
\end{tikzpicture}
\caption{The instance produced from $\varphi=(u\lor v)\land(\neg u\lor\neg v)\land(u\lor v)$, which satisfies
the restriction of \cref{lem:sat}, with variables $u,v$ and clauses $0,1,2$ in this order. Here $p=2$, $q=3$, $k=6$,
$y=21$, and the stations are $S_0=12$, $S_1=15$, $S_2=18$ (dashed). Rows are the six machines of a feasible
schedule realizing the satisfying assignment $u=\mathrm{true}$, $v=\mathrm{false}$. Gray jobs are variable jobs,
blue jobs are literal jobs (subscripted by their variable), and orange jobs are clause jobs. The filters are
labeled $F_c$, and the unlabeled clause jobs are the sinks. Hatched teal boxes are the breaks ($x=2$).}
\label{fig:example}
\end{figure}

\subsection{Correctness}\label{sec:hardness-proof}


We begin with the structural claims of the outline above.

\begin{lemma}[Structure]\label{lem:structure}
In every schedule of the constructed instance on at most $k$ machines, exactly $k$ machines are used, and each
machine works exactly three jobs: one variable job, one literal job, and one clause job, in this order. Moreover:
\begin{enumerate}[(i)]
\item the three literal jobs of variable $i$ are worked by the three machines that work the variable jobs of
variable $i$;
\item the $s_c$ clause jobs of clause $c$ are worked by the machines that work the $s_c$ literal jobs of the
literals of clause $c$.
\end{enumerate}
\end{lemma}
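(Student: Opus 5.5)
The plan is a pure counting argument on pairwise-conflicting families, followed by a matching argument that peels off one variable (resp. one clause) at a time. The break constraint is never used. Only non-overlap and the budget of $k=3p$ machines matter.

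\textbf{Step 1: each machine gets one job of each family.} Every variable job contains $(0,1)$ in its interior, so the $3p$ variable jobs pairwise conflict. Hence at least $k$ machines are needed, and with at most $k$ machines exactly $k$ are used, each working exactly one variable job. Every literal job starts at $6i+3,6i+4$ or $6i+5\le 6p-1$ and ends at a station $S_c\ge 6p$. So all $3p$ literal jobs contain $(6p-1,6p)$ and pairwise conflict; thus every machine works exactly one literal job. Every clause job starts at $S_c+1$ or $S_c+2$, with $S_c+2\le S_{q-1}+2=y-1$, and ends at $y$. So all $3p$ clause jobs contain $(y-1,y)$, and every machine works exactly one clause job. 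This accounts for all $9p$ jobs. The order variable, literal, clause on a machine is forced: the variable job starts at $0$, the clause job ends at $y$, and the literal job overlaps neither.

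\textbf{Step 2: statement (i), by downward induction on $i$.} A literal job $L$ can follow a variable job $V$ on the same machine only if $L$ starts at or after the end of $V$. Assume (i) holds for all variables $i'>i$. Those machines already hold the literal jobs of variables $i'>i$, so the three variable jobs of $i$ must be matched to literal jobs of variables $\le i$. Their ends are $6i+1,6i+3,6i+4$, while every literal job of a variable $i'<i$ starts at most $6(i-1)+5=6i-1<6i+1$. So the only admissible partners are the three literal jobs of $i$ itself, and these are matched bijectively. (Indeed, $A_i$ can only follow $[0,6i+1]$ or $[0,6i+3]$, and so on.) The base case $i=p-1$ is the same argument with no prior constraints.

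\textbf{Step 3: statement (ii), by upward induction on $c$.} A clause job of clause $c$ starts at $S_c+1$ or $S_c+2<S_{c+1}$. So it can only follow a literal job ending at $S_{c'}$ with $c'\le c$. Assume (ii) for all $c'<c$. The machines of earlier clauses have consumed exactly the literal jobs ending at $S_{c'}$ for $c'<c$. The $s_c$ clause jobs of clause $c$ must therefore use literal jobs ending exactly at $S_c$, and there are exactly $s_c$ of them, namely the occurrences in clause $c$. Hence the matching is a bijection.

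The main obstacle is the bookkeeping in Steps 2 and 3, namely checking that the offsets ($6i+1$ versus $6i-1$, and $S_c+2<S_{c+1}$) really separate consecutive variables and clauses. The extremes need the most care: $i=0$, the last clause, and the case $s_c=2$.
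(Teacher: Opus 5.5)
Your proof is correct and follows essentially the paper's approach. Both arguments use the same three pairwise-conflicting families, which force one job of each family per machine. Both also rely on the same offset inequalities, namely $6(i-1)+5<6i+1$ and $S_c+2<S_{c+1}$, to separate consecutive variables and consecutive clauses. The only difference is cosmetic: the paper proves (i) and (ii) by prefix counting (the machines of the literal jobs of variables $0,\dots,i$ coincide with those of the variable jobs of variables $0,\dots,i$, and one then takes differences), whereas you peel off one variable or clause at a time by induction.
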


\begin{proof}
Every variable job contains the interval $[0,1]$. Every literal job contains $[6p-1,6p]$: it starts at
$6(p-1)+5=6p-1$ or earlier and ends at a station, i.e., at $6p$ or later. Every clause job contains $[y-1,y]$.
Each family is therefore a set of $3p=k$ pairwise conflicting jobs and must occupy $k$ distinct machines. Hence
exactly $k$ machines are used, and each works precisely one job from each family --- three jobs in total, and
non-conflict forces the order variable job, literal job, clause job.

Next we prove (i). Since each machine works exactly one job from each family, it suffices to show that, for
every $i$, the set of machines working the literal jobs of variables $0,\dots,i$ equals the set of machines
working the variable jobs of variables $0,\dots,i$. Statement (i) then follows by taking differences for
consecutive values of $i$. A variable job and a literal job on the same machine must not conflict, and variable
jobs start at time $0$, so a machine's variable job must end no later than its literal job starts. Now consider
any machine working a literal job of some variable $j\le i$. That literal job starts no later than $6j+5\le6i+5$, so the
machine's variable job ends at $6i+5$ or earlier; since the variable jobs of variables $i+1,\dots,p-1$ end at
$6(i+1)+1=6i+7$ or later, the machine's variable job belongs to one of the variables $0,\dots,i$. The $3(i+1)$
literal jobs of variables $0,\dots,i$ thus occupy $3(i+1)$ distinct machines, each of whose variable jobs
belongs to variables $0,\dots,i$ --- and there are exactly $3(i+1)$ machines with the latter property, one per
such variable job. The two sets of machines therefore coincide.

The proof of (ii) is analogous: a machine's literal job must end no later than
its clause job starts; the clause jobs of clauses $0,\dots,c$ start at $S_c+2$ or earlier, while the literal
jobs of clauses $c+1,\dots,q-1$ (if any) end at $S_{c+1}=S_c+3$ or later; and the number of clause jobs of clauses
$0,\dots,c$ and the number of literal jobs of those clauses both equal $\sum_{c'\le c}s_{c'}$.
\end{proof}

By \cref{lem:structure}, each machine works a variable job, a literal job of the same variable, and a clause job
of its literal's clause, so machines correspond to the literal occurrences of $\varphi$.
Since every machine is busy at time $0$ and at time $y$, its idle time consists of exactly two (possibly empty) gaps: a
gap $g_1$ between its variable job and its literal job, and a gap $g_2$ between its literal job and its clause
job. A schedule on $k$ machines is therefore feasible if and only if every machine has $g_1\ge2$ or $g_2\ge2$.
Call a literal job \emph{safe} (in a given schedule) if the machine working it has $g_1\ge2$.

Consider variable $i$ and any schedule on at most $k$ machines. Denote by $M_1$, $M_3$, and $M_4$ the machines working the three variable jobs of
variable $i$, indexed by those jobs' right endpoints relative to $6i$. By \cref{lem:structure}(i), the machines $M_1,M_3,M_4$ work exactly
the literal jobs $A_i,N_i,B_i$, one each. Call the resulting assignment of literal jobs to these machines the \emph{routing} of the gadget.
A routing is \emph{conflict-free} if each of the three machines' variable job ends no later than its literal job starts.

\begin{lemma}[Variable gadget]\label{lem:switch}
In every schedule of the constructed instance on at most $k$ machines, the set of safe literal jobs of variable $i$ is one of
\[
\{A_i\},\qquad\{B_i\},\qquad\{A_i,B_i\},\qquad\{N_i\};
\]
that is, a nonempty set of occurrences of a single sign. Conversely, each of these four sets is the safe set of a conflict-free
routing of the gadget.
\end{lemma}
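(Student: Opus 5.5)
The plan is a direct case analysis over the routings of the gadget, using \cref{lem:structure} to reduce every schedule to one of finitely many routings. Fix a schedule on at most $k$ machines and a variable $i$. By \cref{lem:structure}, each of $M_1,M_3,M_4$ works its variable job and then its literal job, and these two jobs do not conflict. Since the variable job starts at $0$, it must end no later than the literal job starts. So the routing realized by the schedule is conflict-free. Moreover, the gap $g_1$ on each of these machines is exactly the literal job's start minus the variable job's end. Safety of a literal job therefore depends only on the routing, not on the rest of the schedule. Both directions of the lemma thus reduce to listing all conflict-free routings and their safe sets.

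Next I will work relative to $6i$. The variable jobs of $M_1,M_3,M_4$ end at $1,3,4$, and the literal jobs $A_i,N_i,B_i$ start at $3,4,5$. Conflict-freeness constrains only $M_4$, which cannot take $A_i$ since $3<4$. So $M_4$ works $N_i$ or $B_i$, while $M_1$ and $M_3$ may take anything. This leaves four routings, and I will compute $g_1$ for each:
\begin{itemize}
\item $M_4\!:\!N_i$ (gap $0$), $M_1\!:\!A_i$ (gap $2$), $M_3\!:\!B_i$ (gap $2$), with safe set $\{A_i,B_i\}$;
\item $M_4\!:\!N_i$ (gap $0$), $M_1\!:\!B_i$ (gap $4$), $M_3\!:\!A_i$ (gap $0$), with safe set $\{B_i\}$;
\item $M_4\!:\!B_i$ (gap $1$), $M_1\!:\!A_i$ (gap $2$), $M_3\!:\!N_i$ (gap $1$), with safe set $\{A_i\}$;
\item $M_4\!:\!B_i$ (gap $1$), $M_1\!:\!N_i$ (gap $3$), $M_3\!:\!A_i$ (gap $0$), with safe set $\{N_i\}$.
\end{itemize}

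These are exactly the four sets in the statement. This proves the forward direction, since every schedule on at most $k$ machines induces one of these routings. It also proves the converse, since each routing above is conflict-free and has the listed safe set.

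There is no real obstacle here. The only points needing care are the following:
\begin{itemize}
\item A boundary touch (end $=$ start) is allowed, so $M_4$ may take $N_i$ and $M_3$ may take $A_i$.
\item Safety requires $g_1\ge 2$, not $g_1\ge1$, which is why the gap-$1$ assignments are not safe.
\item The earlier lemma is what guarantees that the literal job follows the variable job on the same machine.
\end{itemize}
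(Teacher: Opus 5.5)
Your proposal is correct and follows essentially the same route as the paper. Both observe that $M_4$ working $A_i$ is the only pairing that is not conflict-free, enumerate the four remaining routings, and compute $g_1$ as the literal job's start minus the variable job's end, which yields the same four safe sets (the paper records the gaps in a table rather than inline).
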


\begin{proof}
Work in coordinates relative to $6i$, so that the variable jobs of
variable $i$ end at $1$, $3$, and $4$, and its literal jobs $A_i,N_i,B_i$ start at $3$, $4$, and $5$.
Since variable jobs start at time $0$, the only pairing that is not conflict-free is $M_4$ working $A_i$. Every other pairing
is legal, and its first gap $g_1$ equals the start of the literal job minus the end of the variable job, as
listed in \cref{tab:gadget}. A pairing is safe exactly if this difference is at least $2$ (the four
entries $(1,A_i)$, $(1,N_i)$, $(1,B_i)$, and $(3,B_i)$, which are bold in \cref{tab:gadget}).

Of the six bijective assignments of $A_i,N_i,B_i$ to $M_1,M_3,M_4$, exactly two pair $A_i$ with $M_4$, so there
are exactly four conflict-free routings. By \cref{tab:gadget}, their safe pairings are as follows:
\begin{itemize}
\item $M_1{\to}A_i$, $M_3{\to}N_i$, $M_4{\to}B_i$: only $(1,A_i)$ is safe, and the safe set is $\{A_i\}$;
\item $M_1{\to}A_i$, $M_3{\to}B_i$, $M_4{\to}N_i$: $(1,A_i)$ and $(3,B_i)$ are safe, and the safe set is
$\{A_i,B_i\}$;
\item $M_1{\to}N_i$, $M_3{\to}A_i$, $M_4{\to}B_i$: only $(1,N_i)$ is safe, and the safe set is $\{N_i\}$;
\item $M_1{\to}B_i$, $M_3{\to}A_i$, $M_4{\to}N_i$: only $(1,B_i)$ is safe, and the safe set is $\{B_i\}$.
\end{itemize}
The set of safe literal jobs of variable $i$ is therefore always one of the four claimed sets.

Conversely, each of the four sets is the safe set of one of the four routings above, and these routings are
conflict-free.
\end{proof}

\begin{table}[t]
\caption{The first gap $g_1$ for each pairing in the gadget of variable $i$, in coordinates relative to $6i$.
Bold entries ($g_1\ge2$) are the safe pairings. The variable job ending at $4$ cannot be paired with $A_i$, as
the two jobs conflict.}
\label{tab:gadget}
\centering
\begin{tabular}{@{}lccc@{}}
\toprule
& $A_i$ (starts $3$) & $N_i$ (starts $4$) & $B_i$ (starts $5$)\\
\midrule
variable job ending $1$ & $\mathbf{2}$ & $\mathbf{3}$ & $\mathbf{4}$\\
variable job ending $3$ & $0$ & $1$ & $\mathbf{2}$\\
variable job ending $4$ & --- & $0$ & $1$\\
\bottomrule
\end{tabular}
\end{table}

We can now prove \cref{thm:intro-hard}.

\begin{proof}[Proof of \cref{thm:intro-hard}]
Membership in NP is \cref{obs:np}. For hardness, consider first $x=2$. Given a formula $\varphi$ as in
\cref{lem:sat}, the construction of \cref{sec:construction} produces an instance of \problem{} together with a
target $k$, in polynomial time. We now prove that this instance admits a feasible schedule on at most $k$
machines if and only if $\varphi$ is satisfiable. The construction is then a polynomial-time reduction from an
NP-complete problem, and hardness follows.

By \cref{lem:structure}(ii), the
clause jobs of clause $c$ are worked by the machines of its own $s_c$ literal jobs. Every literal job of clause
$c$ ends at the station $S_c$, the filter $F_c$ starts at $S_c+1$, and the sinks start at $S_c+2$. Hence, the
filter gives its machine $g_2=1$ while each sink gives $g_2=2$.

Suppose first that $\varphi$ has a satisfying assignment. We construct a feasible schedule on $k$ machines.
Consider a variable $i$. The assignment makes the literals of exactly one of its two signs true. Assign the three
variable jobs and the three literal jobs of variable $i$ to three machines according to the routing of
\cref{lem:switch} whose safe set is $\{A_i,B_i\}$ if the true sign is the repeated one, and $\{N_i\}$ otherwise.
This routing is conflict-free, and it makes a literal job safe if and only if its literal is true under the
assignment. Next, consider a clause $c$. Its $s_c$ clause jobs must be
worked by the machines that work the literal jobs of its $s_c$ literals, and any assignment between the two is
conflict-free, since every literal job of clause $c$ ends at $S_c$ and every clause job of clause $c$ starts at
$S_c+1$ or later. As the assignment satisfies $\varphi$, clause $c$ contains at least one true literal. Choose
one, let the machine working its literal job work the filter $F_c$, and distribute the sinks among the remaining
$s_c-1$ machines arbitrarily. In the resulting schedule, every machine working a sink has $g_2=2$, and every
machine working a filter has a safe literal job and hence $g_1\ge2$. Every machine therefore has an idle gap of
length at least $2$ in which to place its break, and the schedule is feasible on $k$ machines.

Conversely, suppose that a feasible schedule on at most $k$ machines exists. We show how to construct a satisfying
assignment for $\varphi$. By \cref{lem:switch}, the safe literal jobs of each variable form a nonempty set of
occurrences of a single sign. Assign to each variable the value that makes the literals of this sign true. Now
consider an arbitrary clause $c$ and the machine working its filter $F_c$. This machine has $g_2=1$, so, since
the schedule is feasible, it has $g_1\ge2$, and its literal job is therefore safe. By \cref{lem:structure}(ii),
this literal job belongs to a literal of clause $c$, and since it is safe, this literal is true under the
constructed assignment. Hence every clause of $\varphi$ contains a true literal, and $\varphi$ is satisfiable.

Moreover, the constructed instance has $n=9p$ jobs with all coordinates at most $10.5\,p$, so \problem{} is
NP-complete for $x=2$ even in unary encoding, and machine minimization is strongly NP-hard.
Finally, to show NP-hardness for an arbitrary fixed break length $x\ge2$, keep $k$, set the break length to $x$, and multiply the horizon and every job endpoint of the construction by $x-1$.
Idle gaps of length at most $1$ become at most $x-1<x$, and idle gaps of length at least $2$ become at least
$2x-2\ge x$. With \emph{safe} now meaning $g_1\ge x$, the counting arguments are invariant under scaling and the
proof goes through unchanged. All coordinates remain bounded by $10.5\,(x-1)\,p$, which is linear in $n$ for fixed $x$.
\end{proof}

\section{A polynomial-time algorithm for break length one}\label{sec:x1}


In this section we prove \cref{thm:intro-x1}. Throughout, $x=1$. By \cref{obs:feasible}, the instance is infeasible precisely when some job equals $[0,y]$. The
algorithm rejects this case first, and we assume from now on that no job equals $[0,y]$. Everything rests on the
following observation.

\begin{observation}\label{obs:volume}
For $x=1$, a machine has an idle interval of length $1$ inside $[0,y]$ if and only if the total length of its
jobs is less than $y$, that is, if and only if its jobs do not cover all of $[0,y]$.
\end{observation}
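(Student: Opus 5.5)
We prove the two directions of the equivalence separately, using integrality of the data.

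\emph{Idle interval implies total length less than $y$.} Suppose the machine is idle on some interval $[b,b+1]\subseteq[0,y]$. The jobs of the machine have pairwise disjoint interiors, so their total length equals the Lebesgue measure of their union. This union lies inside $[0,y]$ and misses the open interval $(b,b+1)$, which has measure $1$. Hence the total length is at most $y-1<y$, and in particular the jobs do not cover $[0,y]$.

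\emph{Total length less than $y$ implies an idle interval.} Sort the machine's jobs by start time. Because their interiors are disjoint, consecutive jobs satisfy $r_{j}\le l_{j+1}$. The idle parts of $[0,y]$ are then
\begin{itemize}
\item the gap $[0,l_{\text{first}}]$ before the first job,
\item the gaps $[r_j,l_{j+1}]$ between consecutive jobs,
\item the gap $[r_{\text{last}},y]$ after the last job.
\end{itemize}
Their lengths sum to $y$ minus the total job length, which is positive. So some gap has positive length. Every endpoint is an integer, so that gap has length at least $1$ and contains an interval $[b,b+1]$ on which the machine is idle.

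\emph{Equivalence with not covering $[0,y]$.} Disjoint interiors make total length equal to the measure of the union. The union is a finite union of closed intervals inside $[0,y]$. It therefore covers $[0,y]$ if and only if its measure equals $y$: a finite union of closed intervals missing some point of $[0,y]$ misses an open neighbourhood of it, which has positive measure.

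The argument involves no real obstacle. The only point needing care is the step from "some idle gap has positive length" to "length at least $1$", and that is exactly where integrality is used.
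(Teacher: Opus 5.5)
Your proof is correct and takes essentially the paper's approach. You decompose the idle time into the finitely many gaps before, between and after the sorted jobs, note that their lengths sum to $y$ minus the total job length, and use integrality to turn a gap of positive length into one of length at least $1$. The measure-theoretic argument for the other direction and for the equivalence with covering $[0,y]$ only spells out details that the paper leaves implicit.
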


\begin{proof}
The jobs of a machine have pairwise disjoint interiors inside $[0,y]$, so the idle time of the machine consists of
finitely many gaps of integral length whose total length is $y$ minus the total length of the jobs. Some gap has
length at least $1$ if and only if this total is positive.
\end{proof}

Jobs on one machine cover $[0,y]$ only if, at the right endpoint of every one of them, either the horizon ends or
another job begins. The decisions that matter are therefore which touching jobs share a machine. For an integer $t$ with $0\le t\le y$, let $E_t$ and $S_t$ be
the sets of jobs ending and starting at $t$, and let $\delta(t)$ be the number of jobs $j$ with $l_j\le t\le r_j$.
Note that $E_0=S_y=\emptyset$.

A \emph{handoff set} is a set $H$ of arcs $i\to j$ between jobs with $r_i=l_j$ in which every job has at most one
incoming and at most one outgoing arc. An arc $i\to j$ with $r_i=l_j=t$ is a \emph{handoff at $t$}, and $h_t$
denotes the number of handoffs at $t$. Since $r_j>r_i$ for every arc $i\to j$, the digraph with vertex set the jobs and arc set $H$ is a union of
vertex-disjoint directed paths, possibly consisting of a single job, which we call \emph{runs}. The jobs of a run tile an interval, the \emph{span} of the run, and a
run \emph{covers the horizon} if its span is $[0,y]$.

\begin{lemma}[Handoffs and machines]\label{lem:handoffs}
The jobs can be feasibly scheduled on at most $K$ machines if and only if there is a handoff set in which no run
covers the horizon and $\delta(t)-h_t\le K$ for every integer $t\in[0,y]$.
\end{lemma}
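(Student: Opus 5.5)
The plan is to prove the two directions separately. The forward direction extracts a handoff set from a schedule. The backward direction treats each run of a handoff set as a single closed interval and colors these intervals. Throughout, I use \cref{obs:volume}: for $x=1$, a machine is feasible if and only if its jobs do not cover $[0,y]$.

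\emph{Schedule $\Rightarrow$ handoff set.} Fix a feasible schedule on at most $K$ machines.
\begin{enumerate}[(1)]
\item Order the jobs of each machine by start time, and put an arc $i\to j$ into $H$ whenever $j$ immediately follows $i$ on the same machine and $r_i=l_j$. Every job has at most one successor and one predecessor on its machine, so $H$ is a handoff set.
\item Each run is a contiguous stretch of one machine's sequence. Hence a run covering $[0,y]$ would make that machine's jobs cover the horizon, contradicting \cref{obs:volume}.
\item For the bound at an integer $t$, consider the jobs $j$ with $l_j\le t\le r_j$ on a single machine. Since interiors are disjoint, there is either at most one such job, or exactly two: one ending at $t$ and one starting at $t$. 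In the latter case no job of the machine lies between them, so they are consecutive and contribute a handoff at $t$.
\item Summing over machines gives $\delta(t)\le(\text{number of machines})+h_t\le K+h_t$.
\end{enumerate}

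\emph{Handoff set $\Rightarrow$ schedule.} Let $H$ be as in the statement.
\begin{enumerate}[(1)]
\item Associate with each run its span, viewed as a closed interval $[a,b]$ with integer endpoints. Call two runs in conflict if their closed spans intersect.
\item The key counting step is that, for every integer $t$, the number of runs whose closed span contains $t$ equals $\delta(t)-h_t$. A run containing $t$ has either exactly one job $j$ with $l_j\le t\le r_j$, or exactly two, which are then joined by an arc of $H$ at $t$. Each handoff at $t$ therefore merges two of the $\delta(t)$ jobs into one run.
\item It follows that at most $K$ closed spans contain any common integer point.
\item Closed intervals with integer endpoints that pairwise intersect have a common integer point: by the Helly property the point $\max a$ works, and it is an integer. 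Since interval graphs are perfect, the conflict graph of the closed spans can be colored with at most $K$ colors.
\item Put each color class on its own machine.
\end{enumerate}

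\emph{Feasibility of the constructed schedule.} Jobs within a run have disjoint interiors. Runs of the same color have disjoint closed spans, so all jobs on a machine have disjoint interiors. Moreover, consecutive runs on a machine are separated by a gap of positive, hence integral and at least $1$, length. A machine's jobs can therefore cover $[0,y]$ only if the machine holds a single run that covers the horizon, which is excluded by assumption. By \cref{obs:volume} every machine admits its break.

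The step needing the most care is the counting identity linking $\delta(t)-h_t$ to the number of runs meeting $t$. It must treat a run ending at $t$ and a different run starting at $t$ as two runs, not one. It must also handle the endpoints $t=0$ and $t=y$ correctly, which works because $E_0=S_y=\emptyset$. The choice of closed spans, rather than open ones, is exactly what forces separate runs apart so that no new coverage is created.
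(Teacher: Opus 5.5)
Your proof is correct and takes essentially the same route as the paper. The forward direction builds $H$ from touching consecutive jobs on each machine and charges at most one unit of $\delta(t)-h_t$ to each machine, and the backward direction colors the closed spans of the runs after showing that exactly $\delta(t)-h_t$ spans contain each integer $t$. The only cosmetic difference is in bounding the coloring: you use Helly (the integer common point $\max a$) and perfection of interval graphs, whereas the paper extends spans by $\tfrac12$, appeals to the greedy sweep, and notes that a non-integer point lies in no more spans than the next integer.
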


\begin{proof}
Given a feasible schedule on at most $K$ machines, let $H$ contain the arc $i\to j$ whenever $i$ and $j$ are on the same
machine and $r_i=l_j$. Every job has at most one such predecessor and at most one such successor, so $H$ is a
handoff set, and its runs are the maximal sequences of touching jobs on the machines. At an integer $t$, a machine
carries at most two jobs whose closed intervals contain $t$, and if it carries two, then one of them ends and the
other starts at $t$, so they form a handoff at $t$. Hence every machine contributes at most one to $\delta(t)-h_t$,
and $\delta(t)-h_t\le K$. A run covering the horizon would be a machine whose jobs cover $[0,y]$, contradicting
\cref{obs:volume}.

Conversely, let $H$ be a handoff set as in the statement. We assign runs to machines so that the spans of the runs
on one machine are pairwise disjoint as closed intervals. Unlike jobs, two runs whose spans merely touch at a
point are thus kept on different machines, since placing them on one machine would amount to an additional
handoff. This is still interval graph coloring (since spans have integer endpoints, extending every span by $\tfrac12$
at its right end turns it into the usual convention in which only interiors must be disjoint), so the number of machines needed is the
largest number of spans containing a common point, and the greedy left-to-right sweep achieves this
bound~\cite{GuptaLeeLeung1979}. The span of a run contains an integer $t$ if and only if the run contains a job
whose closed interval contains $t$, and a run contains two such jobs exactly when it has a handoff at $t$. So
exactly $\delta(t)-h_t\le K$ spans contain $t$. A non-integer point is contained in no more spans than the next
integer is, because spans have integer endpoints. Thus $K$ machines suffice. Consecutive jobs of a run touch, and
the spans of the runs on one machine are disjoint, so the jobs on every machine have pairwise disjoint interiors.
Finally, a machine carrying two or more runs is idle between them, and a machine carrying a single run has a span
other than $[0,y]$. So the jobs of no machine cover $[0,y]$, and the schedule is feasible by \cref{obs:volume}.
\end{proof}

A handoff lowers $\delta(t)-h_t$ at one point and otherwise only lengthens a run. The next lemma shows that, for
a budget $K$, it is optimal to hand off exactly as often as the budget forces.

\begin{lemma}[Fewest handoffs]\label{lem:fewest}
If some handoff set satisfies the conditions of \cref{lem:handoffs} for $K$, then so does one with
$h_t=\max(0,\,\delta(t)-K)$ for every integer $t\in[0,y]$.
\end{lemma}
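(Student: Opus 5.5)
The plan is to obtain the desired handoff set by \emph{deleting} arcs from the given one. Let $H$ be a handoff set satisfying the conditions of \cref{lem:handoffs} for $K$. For every integer $t\in[0,y]$ the condition $\delta(t)-h_t\le K$ together with $h_t\ge0$ gives $h_t\ge\max(0,\delta(t)-K)$. So at each $t$ we may choose, arbitrarily, a subset of exactly $\max(0,\delta(t)-K)$ of the handoffs of $H$ at $t$ and discard the rest. Let $H'$ be the union of the chosen subsets over all $t$. We then verify the three required properties of $H'$ in turn.

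First, $H'\subseteq H$, so every job still has at most one incoming and at most one outgoing arc, and every arc $i\to j$ still satisfies $r_i=l_j$. Hence $H'$ is a handoff set. Second, an arc $i\to j$ is a handoff at exactly one point, namely $t=r_i=l_j$. Deleting arcs at one point therefore changes $h_{t'}$ for no other $t'$. In $H'$ we get $h'_t=\max(0,\delta(t)-K)$ exactly, and so $\delta(t)-h'_t=\min(\delta(t),K)\le K$ for every integer $t$.

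Third, we check that no run of $H'$ covers the horizon. Deleting an arc from a run (a directed path) splits it into two consecutive subpaths. Each run of $H'$ is therefore a contiguous subpath of some run of $H$, and its span is contained in the span of that run. A run of $H'$ with span $[0,y]$ would thus lie inside a run of $H$ whose span contains $[0,y]$. Since all jobs lie in $[0,y]$, that run of $H$ would have span exactly $[0,y]$, contradicting the hypothesis on $H$.

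I expect no real obstacle here. The only points needing care are that an arc contributes to $h_t$ at a single $t$, so the deletions at different points do not interact, and that runs only shrink under deletion.
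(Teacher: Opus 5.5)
Your proof is correct and follows essentially the paper's argument. The paper phrases it extremally: it takes a valid handoff set with the fewest arcs and shows that removing one surplus handoff at some $t$ would preserve all conditions, whereas you delete all surplus handoffs directly, and both rest on the same two facts, that deleting an arc at $t$ changes only $h_t$ and that runs only shrink under deletion.
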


\begin{proof}
Among the handoff sets satisfying the conditions, choose one, $H$, with the fewest arcs. The condition
$\delta(t)-h_t\le K$ gives $h_t\ge\delta(t)-K$ at every $t$. Suppose $h_t>\max(0,\delta(t)-K)$ at some $t$, and
remove one handoff at $t$ from $H$. This changes $\delta(t')-h_{t'}$ only at $t'=t$, where the new value
$\delta(t)-h_t+1$ is at most $K$, and it splits one run into two runs with shorter spans. The result satisfies the
conditions with fewer arcs, a contradiction.
\end{proof}

Fix a budget $K\ge D$, where $D$ is the maximum depth of \cref{sec:prelim}, the minimum number of machines
without breaks, and write $c_t=\max(0,\,\delta(t)-K)$. Then $c_t\le\min(|E_t|,|S_t|)$ for every $t$. At $t=0$
this holds because every job of $S_0$ contains the first elementary interval, so $\delta(0)=|S_0|\le D\le K$ and
$c_0=0$, and symmetrically $c_y=0$. At an integer
$0<t<y$, the $\delta(t)-|S_t|$ jobs whose closed intervals contain $t$ but that do not start at $t$ all contain in
their interiors the points immediately to the left of $t$, which lie in a common elementary interval, so
$\delta(t)-|S_t|\le D\le K$, and symmetrically $\delta(t)-|E_t|\le K$.

After \cref{lem:fewest}, what remains is to choose \emph{which} job of $E_t$ hands off to which job of $S_t$, in
such a way that no run covers the horizon. Only a run whose span starts at $0$ can cover the horizon, and such a
run starts with a job of $S_0$. Whenever a job of such a run ends, at a point $t$ say, that job is either one of
the $|E_t|-c_t$ jobs of $E_t$ without an outgoing handoff, and the run ends, or one of the $c_t$ jobs of $E_t$
that hand off, and the run continues on a job of $S_t$. The run has to end at some $t<y$. This is captured by the following network $N_K$. It has a source, a sink,
and two nodes $u_t,v_t$ for every right endpoint $t<y$ of a job, and the following arcs:
\begin{itemize}
\item $\mathrm{source}\to u_{r_j}$ of capacity $1$ for every job $j\in S_0$;
\item $u_t\to\mathrm{sink}$ of capacity $|E_t|-c_t$ and $u_t\to v_t$ of capacity $c_t$, for every right endpoint
$t<y$;
\item $v_t\to u_{r_j}$ of capacity $1$ for every job $j\in S_t$ with $r_j<y$.
\end{itemize}
Parallel arcs are allowed. We call the arc introduced for a job $j$ the \emph{arc of $j$}. Every job has at most
one arc, and a job ending at $y$ has none, since a run starting at $0$ must not continue on it. At an integer $t$
at which no job ends, $c_t\le|E_t|=0$ by the bound above, so no nodes are needed there, and $N_K$ has $O(n)$ nodes
and arcs.

\begin{lemma}[Routing]\label{lem:routing}
Let $K\ge D$. A handoff set with $h_t=c_t$ for every integer $t\in[0,y]$ and no run covering the horizon exists if
and only if $N_K$ admits a flow of value $|S_0|$.
\end{lemma}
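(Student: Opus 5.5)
The plan is to prove both directions by matching runs that start at time $0$ (call them \emph{$0$-runs}) with units of an integral flow in $N_K$. A unit of flow entering $u_t$ along the arc of a job $j$ (so $r_j=t$) will mean that $j$ lies on a $0$-run. Sending it on to $v_t$ will mean that $j$ hands off at $t$, and the arc $v_t\to u_{r_{j'}}$ it then uses names the successor $j'\in S_t$. Sending it to the sink will mean that the run ends at $t$. Because every arc of $N_K$ goes from a node indexed by $t$ to a node indexed by a strictly larger right endpoint, $N_K$ is acyclic. All capacities are integral, so if a flow of value $|S_0|$ exists, an integral one does; we fix such a flow $f$.

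\emph{Handoff set $\Rightarrow$ flow.} Let $H$ have $h_t=c_t$ and no covering run. Put one unit on the arc of every job that lies on a $0$-run. For a $0$-run job $j$ ending at $t$, route its unit $u_t\to v_t$ if $j$ hands off in $H$, and $u_t\to\mathrm{sink}$ otherwise. If $j$ hands off to $j'$, continue along the arc of $j'$.
\begin{itemize}
\item That arc exists: $j'$ is also on the $0$-run, and if $r_{j'}=y$ the run would cover the horizon. For the same reason no $0$-run job ends at $y$, so every unit has a node to enter.
\item Capacities hold: the units sent $u_t\to v_t$ are handoffs at $t$, so there are at most $h_t=c_t$ of them. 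The units sent to the sink are jobs of $E_t$ without an outgoing arc, and there are at most $|E_t|-h_t$ such jobs.
\end{itemize}
Conservation holds at every node, and the $|S_0|$ source arcs are saturated.

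\emph{Flow $\Rightarrow$ handoff set.} Call a job \emph{marked} if its arc carries one unit of $f$. Fix a right endpoint $t<y$.
\begin{itemize}
\item Let $g_t$ be the inflow at $u_t$; it equals the number of marked jobs in $E_t$, since each such job contributes one unit-capacity arc. Let $f_t\le c_t$ be the flow on $u_t\to v_t$.
\item Choose any $f_t$ of the marked jobs of $E_t$ and pair them bijectively with the $f_t$ jobs of $S_t$ whose arcs leave $v_t$ with flow. These are distinct jobs, since each such arc has capacity $1$.
\item Complete the handoffs at $t$ with $c_t-f_t$ further arcs from unmarked jobs of $E_t$ to unmarked jobs of $S_t$. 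Enough unmarked ending jobs exist: the sink capacity gives $g_t-f_t\le|E_t|-c_t$, hence $|E_t|-g_t\ge c_t-f_t$. Enough unmarked starting jobs exist because the jobs of $S_t$ receiving flow from $v_t$ are exactly the marked ones, numbering $f_t$, and $c_t\le|S_t|$ by the bound established before the lemma.
\end{itemize}
At $t=0$ and $t=y$ we have $c_t=0$ and add nothing. Every job then has at most one incoming and one outgoing arc, and $h_t=c_t$ for all $t$.

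It remains to show that no run covers the horizon, and I expect this to be the main obstacle. The argument is an induction along each run starting from a job of $S_0$: every job on it is marked. The first job of such a run is marked, since its source arc is saturated. A handoff from a marked job always goes to a marked job, by construction. Conversely, an unmarked job only ever hands off to an unmarked job, so the marked jobs are closed along runs in both directions. Since jobs ending at $y$ have no arc, they are never marked. Hence no run through a job of $S_0$ reaches $y$, and no run covers the horizon.

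The care needed is in the counting of unmarked jobs at each $t$, which is exactly where the sink capacity $|E_t|-c_t$ and the bound $c_t\le\min(|E_t|,|S_t|)$ enter.
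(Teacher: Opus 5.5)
Your proof is correct and follows the paper's argument. The forward direction sends one unit along each run from $S_0$. In the converse you build the handoffs at each $t$ from the continuing flow plus $c_t-f_t$ pairs of flow-free jobs, with exactly the paper's counts from the sink capacity $|E_t|-c_t$ and from $c_t\le|S_t|$.

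The only difference is in that converse. The paper takes a path decomposition and shows that each run from $S_0$ follows its path. You instead pair marked jobs arbitrarily and conclude from the marked set being closed under handoffs and missing every job that ends at $y$. This is slightly leaner and makes acyclicity unnecessary, which is just as well: your stated reason for acyclicity overlooks the arcs $u_t\to v_t$, which join two nodes with the same index.
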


\begin{proof}
Let $H$ be such a handoff set. Every job of $S_0$ starts a run $j_1\to j_2\to\dots\to j_m$, and $r_{j_m}<y$ since
the run does not cover the horizon. Route one unit of flow along
$\mathrm{source}\to u_{r_{j_1}}\to v_{r_{j_1}}\to u_{r_{j_2}}\to\dots\to u_{r_{j_m}}\to\mathrm{sink}$. Distinct
runs use distinct jobs, so the arcs of capacity $1$ are respected. At a node $u_t$, the units that continue to
$v_t$ correspond to distinct handoffs at $t$, of which there are $c_t$, and the units that enter the sink
correspond to distinct jobs of $E_t$ without an outgoing handoff, of which there are $|E_t|-c_t$. So the flow is
feasible, and its value is $|S_0|$.

Conversely, let $N_K$ admit a flow of value $|S_0|$. Take it integral and, since $N_K$ is acyclic, decompose it
into $|S_0|$ paths carrying one unit each, one through every source arc. A path enters a node $u_t$ on the arc of
a job of $E_t$ and, if it continues, leaves $v_t$ on the arc of a job of $S_t$. We build $H$ at every $t$ in two steps. First,
for every path that continues through $v_t$, add the arc from the job on which it enters $u_t$ to the job on which
it leaves $v_t$. The arcs of capacity $1$ make these jobs distinct. Second, add arcs from $E_t$ to $S_t$ between
jobs not used so far at $t$, excluding the jobs on which paths enter $u_t$ and stop, until there are $c_t$ arcs at
$t$. This is possible: if $a$ paths enter $u_t$ and $f$ of them continue, then $a-f\le|E_t|-c_t$ by the capacity
of the sink arc, so at least $|E_t|-a\ge c_t-f$ jobs of $E_t$ are available, and at least $|S_t|-f\ge c_t-f$ jobs
of $S_t$ are available since $c_t\le|S_t|$. Every job has at most one incoming and at most one outgoing arc, so
$H$ is a handoff set with $h_t=c_t$ at every $t$. The run starting at a job $j\in S_0$ follows the path through
$\mathrm{source}\to u_{r_j}$: at every node, the job on which the path arrives has an outgoing handoff exactly when
the path continues, and then to the job on which the path leaves. Every path enters the sink from a node $u_t$
with $t<y$, so no run starting at $0$ covers the horizon, and no other run can.
\end{proof}

\begin{proof}[Proof of \cref{thm:intro-x1}]
For $x=0$ the greedy algorithm computes an optimal schedule in $O(n\log n)$ time~\cite{GuptaLeeLeung1979}, so let
$x=1$. We claim that
\[
\OPT=\min\{\,K\ge D:\ N_K\text{ admits a flow of value }|S_0|\,\}.
\]
If $K$ machines suffice, then $K\ge D$, and \cref{lem:handoffs,lem:fewest,lem:routing} in turn yield a handoff set
with the properties of \cref{lem:handoffs}, one with $h_t=c_t$ at every $t$, and a flow of value $|S_0|$.
Conversely, for $K\ge D$ a flow of value $|S_0|$ yields by \cref{lem:routing} a handoff set with $h_t=c_t$ and no
run covering the horizon, and since $\delta(t)-c_t=\min(\delta(t),K)\le K$, \cref{lem:handoffs} turns it into a
feasible schedule on $K$ machines. So the displayed set consists of the budgets $K$ for which $K$ machines
suffice. This set is upward closed, and it contains $K=\max_t\delta(t)$, where every $c_t$ is zero and every job
of $S_0$ simply ends its run at its right endpoint. Hence binary search over $K\in[D,\max_t\delta(t)]$ finds
$\OPT$ with $O(\log n)$ maximum-flow computations.

For the running time, the sets $E_t$ and $S_t$, the values $\delta(t)$ at the endpoints of jobs, and $D$ are
obtained in one sweep in $O(n\log n)$ time. The network $N_K$ has $O(n)$ nodes and arcs, and its maximum flow
value is at most $|S_0|\le n$, so an augmenting-path algorithm, which needs at most $n$ augmentations of $O(n)$
time each, computes it in $O(n^2)$ time. Finally, a schedule
is assembled from the flow for $K=\OPT$ in $O(n\log n)$ time: decompose the flow into paths, which use $O(n)$
arcs in total since distinct paths use distinct arcs of jobs, build the handoff set of \cref{lem:routing}, and
assign runs to machines by the greedy sweep of \cref{lem:handoffs}.
\end{proof}

\section{An additive-one approximation algorithm}\label{sec:algorithm}


In this section we prove \cref{thm:intro-approx}. By \cref{obs:feasible}, which the algorithm checks first, every
job satisfies $r_j\le y-x$ or $l_j\ge x$. Define the two \emph{eligibility classes}
\[
\Ecl=\{\,j: r_j\le y-x\,\},\qquad \Lcl=\{\,j: l_j\ge x\,\},
\]
so that $\Ecl\cup\Lcl$ contains all jobs. A job in $\Ecl$ can be processed before a break, a job in $\Lcl$ after
one, and a job in $\Ecl\cap\Lcl$ either way.

\subsection{A normal form and a chain cover}\label{sec:normal-form}

The break splits the timeline of each machine into a part before the break and a part after it, and every job on
the machine lies entirely in one of the two parts. This gives feasible machines the following normal form.

\begin{lemma}[Normal form]\label{lem:normal-form}
A set of pairwise non-conflicting jobs, listed chronologically, can be feasibly scheduled on one machine if and
only if it is the concatenation of a (possibly empty) prefix of jobs from $\Ecl$ and a (possibly empty) suffix of
jobs from $\Lcl$ such that, if both parts are nonempty, the last prefix job $i$ and the first suffix job $j$
satisfy $l_j-r_i\ge x$.
\end{lemma}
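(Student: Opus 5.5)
We prove the two directions separately, working directly from the definition of feasibility: a machine is feasible exactly when some interval $[b,b+x]\subseteq[0,y]$ avoids the interiors of all its jobs, in the sense that every job $j$ satisfies $r_j\le b$ or $l_j\ge b+x$. Throughout, let the jobs be listed chronologically as $j_1,\dots,j_m$. Because they are pairwise non-conflicting, both the left endpoints and the right endpoints are nondecreasing along this list.

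\emph{Necessity.} Suppose the set is feasible with break $[b,b+x]$. Since $x\ge1>0$, no job can satisfy both $r_j\le b$ and $l_j\ge b+x$, so the jobs split into the set $P$ of jobs with $r_j\le b$ and the set $Q$ of jobs with $l_j\ge b+x$.

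We first check that $P$ is a chronological prefix. Suppose $j\in P$ and $j'$ precedes $j$. Then $r_{j'}\le l_j<r_j\le b$, so $j'\in P$ as well.

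Next we check eligibility. Every job in $P$ has $r_j\le b\le y-x$, since the break lies inside $[0,y]$, so it belongs to $\Ecl$. Every job in $Q$ has $l_j\ge b+x\ge x$, so it belongs to $\Lcl$.

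Finally, if both parts are nonempty, let $i$ be the last job of $P$ and $j$ the first job of $Q$. Then $l_j\ge b+x$ and $r_i\le b$, so $l_j-r_i\ge x$.

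\emph{Sufficiency.} Given such a decomposition, we choose the break according to which parts are nonempty.
\begin{itemize}
\item If both parts are nonempty, take $b=r_i$. Then $b+x\le l_j\le y$, so the break fits in the horizon. Every prefix job ends by $r_i=b$ by chronological order. Every suffix job starts at $l_j\ge b+x$ or later, again by chronological order.
\item If only the prefix is nonempty, take $b=y-x$. Every prefix job lies in $\Ecl$, so it ends by $y-x=b$.
\item If only the suffix is nonempty, take $b=0$. Every suffix job lies in $\Lcl$, so it starts at $x=b+x$ or later.
\item If the set is empty, any break works; we take $b=0$, which is valid since $y\ge x$.
\end{itemize}
In each case the required condition holds for every job, so the machine is feasible.

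There is no real obstacle here. The only points needing care are the prefix claim, which uses the monotonicity of endpoints under the non-conflict assumption, and making sure the break stays inside $[0,y]$ in every case.
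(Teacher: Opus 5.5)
Your proposal is correct and follows the paper's proof: for necessity you split the jobs by a fixed break, and for sufficiency you place the break at $[y-x,y]$, at $[0,x]$, or at $[r_i,r_i+x]$ in the gap between the two parts. You also verify explicitly that the pre-break jobs form a chronological prefix and that the break stays inside $[0,y]$, which the paper leaves implicit.
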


\begin{proof}
Fix a break $[b,b+x]$ of the machine. Since $x\ge1$, every job lies entirely before or entirely after it. Jobs
before it end by $b\le y-x$ and belong to $\Ecl$. Jobs after it start at $b+x\ge x$ and belong to $\Lcl$. If
both groups are nonempty, the gap between them contains the break and has length at least $x$. Conversely, a set
of jobs from $\Ecl$ is feasible with the break $[y-x,y]$, a set of jobs from $\Lcl$ with the break $[0,x]$, and a
mixed set as in the statement accommodates its break in the gap between the two parts.
\end{proof}

Guided by the normal form, we assign each job $j$ a \emph{label} $z_j\in\{0,1\}$, where $z_j=1$ means that $j$ is
processed before its machine's break (\emph{early}) and $z_j=0$ that it is processed after (\emph{late}). A
labeling $z\in\{0,1\}^n$ is \emph{legal} if $z_j=1$ for every $j\in\Ecl\setminus\Lcl$ and $z_j=0$ for every
$j\in\Lcl\setminus\Ecl$. We call these coordinates \emph{forced} and the coordinates of jobs in $\Ecl\cap\Lcl$
\emph{free}. For a legal labeling $z$, let $i\prec_z j$ if $r_i\le l_j$ and either
\begin{enumerate}[(a)]
\item $z_i=z_j$, or
\item $z_i=1$, $z_j=0$, and $l_j-r_i\ge x$.
\end{enumerate}
The relation $\prec_z$ is irreflexive, and it is transitive: a chain $i\prec_z j\prec_z h$ contains at most one
early-to-late transition, and the gap between $i$ and $h$ contains the gap at that transition. Hence $\prec_z$ is
a strict partial order, and \cref{lem:normal-form} translates into the following statement.

\begin{lemma}\label{lem:chains}
Let $z$ be a legal labeling. A set of jobs can be feasibly scheduled on one machine with every job on the side of
the break prescribed by $z$ if and only if it is a chain of $\prec_z$. Consequently, the minimum number of
machines of a feasible schedule respecting $z$ equals the width $W(z)$ of the partial order
$(\{1,\dots,n\},\prec_z)$, and a schedule attaining it can be computed in polynomial time.
\end{lemma}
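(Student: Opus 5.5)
The plan is to prove the two directions of the chain characterization directly from \cref{lem:normal-form}. Then the width statement follows from Dilworth's theorem together with its standard matching-based algorithmic proof.

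\emph{Chains are feasible.} Let $C$ be a chain of $\prec_z$, listed so that $j_1\prec_z j_2\prec_z\dots\prec_z j_m$. Since $i\prec_z j$ implies $r_i\le l_j$, consecutive jobs do not conflict, and hence no two jobs of $C$ conflict. The list is also chronological. Every relation in the chain goes from label $1$ to $1$, from $0$ to $0$, or from $1$ to $0$, never from $0$ to $1$. So the labels along the chain form a block of early jobs followed by a block of late jobs. Because $z$ is legal, early jobs lie in $\Ecl$ and late jobs lie in $\Lcl$. If both blocks are nonempty, the transition pair $j_s\prec_z j_{s+1}$ with $z_{j_s}=1$ and $z_{j_{s+1}}=0$ satisfies (b), so $l_{j_{s+1}}-r_{j_s}\ge x$. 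By \cref{lem:normal-form} the set can be scheduled on one machine. The break can be placed in the transition gap, or at $[y-x,y]$ or $[0,x]$ if one block is empty, so every job lies on the side of the break prescribed by $z$.

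\emph{Feasible sets are chains.} Suppose a set is scheduled on one machine with a break $[b,b+x]$, and every job lies on its prescribed side: early jobs end by $b$ and late jobs start at $b+x$. Take two of its jobs $i\neq j$ with $r_i\le l_j$; one of the two orders holds because the jobs do not conflict. If $z_i=z_j$, then (a) gives $i\prec_z j$. If $z_i=1$ and $z_j=0$, then $l_j-r_i\ge(b+x)-b=x$, so (b) gives $i\prec_z j$. The case $z_i=0$, $z_j=1$ cannot occur, since it would force $l_i\ge b+x>b\ge r_j$, which contradicts $r_i\le l_j$ together with $l_i<r_i$. Hence any two jobs of the set are comparable, and the set is a chain.

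\emph{Width and algorithm.} A schedule respecting $z$ is therefore exactly a partition of the jobs into chains of $\prec_z$. By Dilworth's theorem, the minimum number of chains equals the width $W(z)$. To compute such a partition, build the bipartite graph with parts being two copies of the jobs, with an edge $(i,j')$ whenever $i\prec_z j$. By transitivity, a maximum matching $M$ yields a chain partition with $n-|M|$ chains. This matching can be found in polynomial time, for example with Hopcroft--Karp on $O(n^2)$ edges.

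The main subtle point is the direction from feasible sets to chains. It uses the fact that a fixed break forces the prescribed side of each job, and the exclusion of the $0\to1$ order is needed there. Everything else is routine.
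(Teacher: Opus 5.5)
Your proposal is correct and takes the same approach as the paper. The paper's proof likewise derives the chain characterization from \cref{lem:normal-form}, using that a chain has at most one early-to-late transition and that this transition carries a gap of at least $x$, and then invokes Dilworth's theorem together with the standard reduction of minimum chain cover to bipartite matching.

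You spell out details the paper leaves implicit, such as ruling out a late job that precedes an early one under a fixed break. The one step you leave unstated is the routine converse that any partition into $c$ chains yields a matching of size $n-c$; this is what guarantees that $n-|M|$ is the minimum number of chains.
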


\begin{proof}
The first claim restates \cref{lem:normal-form}: consecutive jobs of a chain with the same label need only be
non-conflicting, and the single early-to-late transition of a chain carries a gap of length at least $x$. The
minimum number of chains covering a partial order equals its width by Dilworth's theorem~\cite{Dilworth1950}, and
a minimum chain cover is computable in polynomial time by the standard reduction to bipartite matching.
\end{proof}

Conversely, every feasible schedule induces a legal labeling: set $z_j=1$ exactly for the jobs processed before
their machine's break. Legality follows from the proof of \cref{lem:normal-form}, and every machine of the
schedule is a chain of $\prec_z$. Therefore
\begin{equation}\label{eq:opt-minz}
\OPT=\min\{\,W(z): z\in\{0,1\}^n\text{ legal}\,\}.
\end{equation}
The whole difficulty of the problem lies in this minimization over $z$, which, for every fixed $x\ge2$, cannot be
carried out exactly in polynomial time by \cref{thm:intro-hard} unless $\mathrm{P}=\mathrm{NP}$. The remaining two subsections show that it can be solved to within an additive one.

\subsection{A formula for the width}\label{sec:width}

Recall the elementary intervals of \cref{sec:prelim}. Order them from left to right, and let $\mathbf{A}=(a_{Pj})$ be their incidence
matrix with jobs: $a_{Pj}=1$ if $P\subseteq(l_j,r_j)$ and $a_{Pj}=0$ otherwise, so that the depth of $P$ is $d(P)=\sum_j a_{Pj}$. For a
labeling $z$, define the \emph{early depth} of $P$ as
\[
Z_z(P)=\sum_j a_{Pj}\,z_j ;
\]
when $z$ is integral, $Z_z(P)$ counts the early jobs containing $P$ and
$d(P)-Z_z(P)$ the late ones. We call an ordered pair $(P,Q)$ of elementary intervals
\emph{admissible} if the left endpoint of $Q$ minus the right endpoint of $P$ is less than $x$ (in particular, whenever $Q$ does not lie to the right of $P$). Equivalently, $(P,Q)$ is admissible if there are points $s\in P$ and $t\in Q$ with $t-s<x$. The width of
$\prec_z$ is then determined by the depths alone.

\begin{lemma}[Width formula]\label{lem:width}
For every legal labeling $z\in\{0,1\}^n$,
\begin{equation}\label{eq:width}
W(z)=\max_{(P,Q)\ \mathrm{admissible}}\ \bigl(Z_z(P)+d(Q)-Z_z(Q)\bigr).
\end{equation}
\end{lemma}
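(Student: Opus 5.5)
We prove the two inequalities separately, using that $W(z)$ is the size of a largest antichain of $\prec_z$. The lower bound exhibits an antichain for each admissible pair $(P,Q)$. The upper bound shows, via the one-dimensional Helly property of intervals, that every antichain fits inside such an antichain for some admissible pair. Throughout, $z$ is a legal integral labeling, so every job is either early or late.

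\emph{Lower bound.} Fix an admissible pair $(P,Q)$. Let $A$ consist of the early jobs containing $P$ and the late jobs containing $Q$; then $|A|=Z_z(P)+d(Q)-Z_z(Q)$, since no job is both early and late. We check that any two distinct jobs of $A$ are incomparable.
\begin{enumerate}[(a)]
\item Two early jobs both contain $P$ in their interiors, so they conflict. Hence $r_i\le l_j$ fails in both directions, and they are incomparable.
\item Two late jobs both contain $Q$, and the same argument applies.
\item Take an early job $i\supseteq P$ and a late job $j\supseteq Q$.
  \begin{itemize}
  \item $j\prec_z i$ is impossible: the labels differ, and the only allowed transition in the definition of $\prec_z$ is early-to-late.
  \item $i\prec_z j$ would require $l_j-r_i\ge x$. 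But $r_i$ is at least the right endpoint of $P$, and $l_j$ is at most the left endpoint of $Q$. Admissibility then gives $l_j-r_i<x$.
  \end{itemize}
\end{enumerate}
So $A$ is an antichain, and $W(z)\ge|A|$ by Dilworth's theorem (as used in \cref{lem:chains}).

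\emph{Upper bound.} Let $A$ be any antichain, split into its early part $A_1$ and late part $A_0$.

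First, the jobs within $A_1$ are pairwise incomparable and share a label. By case (a) of $\prec_z$, two same-label jobs are comparable exactly when they do not conflict, so the jobs of $A_1$ pairwise conflict. For intervals this means their interiors have a common point (Helly property in dimension one). Concretely, with $r^*=\min_{i\in A_1}r_i$, the intersection of their interiors is an open interval with right end $r^*$. Since $r^*$ is a job endpoint, the elementary interval $P$ immediately to the left of $r^*$ lies inside every job of $A_1$. Symmetrically, with $l^*=\max_{j\in A_0}l_j$, the elementary interval $Q$ immediately to the right of $l^*$ lies inside every job of $A_0$.

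Second, we verify that $(P,Q)$ is admissible. Take $i\in A_1$ attaining $r^*$ and $j\in A_0$ attaining $l^*$. Since $i\not\prec_z j$, either $r_i>l_j$ or $l_j-r_i<x$; in both cases $l_j-r_i<x$. Hence the left endpoint of $Q$ minus the right endpoint of $P$ equals $l^*-r^*<x$.

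If $A_1$ is empty, choose $Q$ as above and set $P=Q$; this pair is admissible because the difference is negative. Symmetrically, if $A_0$ is empty, set $Q=P$.

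Finally, $|A_1|\le Z_z(P)$ and $|A_0|\le d(Q)-Z_z(Q)$, so $|A|$ is bounded by the right-hand side of \eqref{eq:width}. Together with the lower bound, this proves \cref{lem:width}.

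The main obstacle is the upper bound: one must choose $P$ and $Q$ at the extreme endpoints $r^*$ and $l^*$, so that the single witnessing pair $(i,j)$ certifies admissibility. An arbitrary common point of $A_1$ and an arbitrary common point of $A_0$ would not suffice.
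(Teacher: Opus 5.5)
Your proof is correct and follows the paper's argument essentially step for step: the upper bound makes the same extremal choice of $P$ ending at the smallest right endpoint of the early members and $Q$ starting at the largest left endpoint of the late members, with the same witnessing pair certifying admissibility, and your lower bound (jobs containing $P$ and $Q$ rather than the paper's points $s\in P$, $t\in Q$) builds the same antichain. Two trivial remarks: the empty antichain, which you skip, is immediate, and $W(z)\ge|A|$ follows from the definition of width, so Dilworth's theorem is not needed there.
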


\begin{figure}[t]
\centering
\begin{tikzpicture}[x=0.9cm,y=0.42cm]
  \draw[->] (0,0) -- (13.2,0);
  \fill[gray!25] (2.5,0) rectangle (4.0,6.6);
  \fill[gray!25] (6.2,0) rectangle (8.2,6.6);
  \node[below] at (3.25,0) {$P$};
  \node[below] at (7.2,0) {$Q$};
  \foreach \l/\r/\h in {0.6/4.0/1, 1.4/5.0/2, 2.5/4.6/3} {
    \draw[thick] (\l,\h) -- (\r,\h);
    \draw[thick] (\l,\h-0.2) -- (\l,\h+0.2);
    \draw[thick] (\r,\h-0.2) -- (\r,\h+0.2);
  }
  \foreach \l/\r/\h in {6.2/9.5/4, 5.4/8.2/5, 5.9/10.8/6} {
    \draw[thick,dashed] (\l,\h) -- (\r,\h);
    \draw[thick] (\l,\h-0.2) -- (\l,\h+0.2);
    \draw[thick] (\r,\h-0.2) -- (\r,\h+0.2);
  }
  \draw[decorate,decoration={brace,amplitude=3pt,mirror}] (4.0,-1.4) -- (6.2,-1.4)
    node[midway,below=3pt] {$<x$};
  \node[right] at (10.9,1) {early jobs};
  \node[right] at (10.9,6) {late jobs};
\end{tikzpicture}
\caption{An antichain of $\prec_z$ as counted by an admissible pair $(P,Q)$: the early jobs (solid) pairwise
conflict and contain $P$, the late jobs (dashed) pairwise conflict and contain $Q$, and the gap between $P$ and $Q$
is shorter than $x$, so no early job precedes a late one (schematic).}
\label{fig:width}
\end{figure}

\begin{proof}
We first show that every antichain is counted by some admissible pair (\cref{fig:width}). Let $C$ be a nonempty
antichain (for $C=\emptyset$ there is nothing to show). Two jobs with the same label are comparable as soon as their interiors are disjoint, so the early members of $C$
pairwise conflict, and by the Helly property of intervals their interiors have a nonempty common intersection.
The same holds for the late members. If one of the two groups is empty, choose an elementary interval $P$ inside
the common intersection of the other group and set $Q=P$. The pair is admissible because the left endpoint of $P$ precedes its right endpoint, and
$Z_z(P)+d(P)-Z_z(P)=d(P)\ge|C|$. If both groups are nonempty, let $\rho$ be the smallest right endpoint among
the early members of $C$ and $\lambda$ the largest left endpoint among the late members. Let $P$ be the
elementary interval with right endpoint $\rho$ and $Q$ the elementary interval with left endpoint $\lambda$ --- then
$P$ lies in the common intersection of the early members, since each of them conflicts with the job ending at $\rho$
and hence starts before $\rho$, and likewise $Q$ lies in that of the late members. If
$\rho\le\lambda$, the early job ending at $\rho$ and the late job starting at $\lambda$ have disjoint interiors, so their
incomparability forces $\lambda-\rho<x$; if $\lambda<\rho$, then $\lambda-\rho<0<x$. Thus $(P,Q)$ is
admissible, every early member of $C$ is counted by $Z_z(P)$, and every late member by $d(Q)-Z_z(Q)$, so the
right-hand side of~\eqref{eq:width} is at least $|C|$.

For the converse, fix an admissible pair $(P,Q)$ and points $s\in P$, $t\in Q$ with $t-s<x$. Let $C$ consist of
all early jobs whose interior contains $s$ and all late jobs whose interior contains $t$. No job is counted twice,
since each job is either early or late, so $|C|=Z_z(P)+d(Q)-Z_z(Q)$. Early members of $C$ pairwise conflict and
are therefore incomparable. The same is true for late members. A late job never precedes an early job in $\prec_z$, 
and if an early member $i$ preceded a late member $j$, then $l_j-r_i<t-s<x$, contradicting condition~(b). So $C$ is an
antichain, and $W(z)\ge|C|$.
\end{proof}

\subsection{Relaxing and rounding}\label{sec:rounding}

By~\eqref{eq:opt-minz} and \cref{lem:width}, $\OPT$ is the optimal value of the integer program
\begin{equation}\label{eq:ip}
\begin{array}{ll@{\qquad}l}
\text{minimize} & K \\
\text{subject to} & Z_z(P)+d(Q)-Z_z(Q)\le K & \text{for all admissible pairs }(P,Q),\\
& z_j\in\{0,1\} & \text{for all free }j,
\end{array}
\end{equation}
where the forced coordinates of $z$ are fixed to their legal values. The constraints are linear in $z$, and there
are $O(n)$ variables and $O(n^2)$ constraints. The algorithm relaxes $z_j\in\{0,1\}$ to $0\le z_j\le1$ and solves
the resulting linear program in polynomial time. Let $(z^*,K^*)$ be an optimal solution. Every legal labeling is
feasible for the relaxation, so
\[
K^*\le\OPT.
\]

It remains to round $z^*$ to a legal labeling without losing more than one machine. Recall the incidence matrix $\mathbf{A}$. The
elementary intervals contained in a job are consecutive, so every column of $\mathbf{A}$ has its ones in consecutive
positions, and $\mathbf{A}$ is totally unimodular~\cite{FulkersonGross1965,Schrijver1986}. Consider the polytope of all $z'\in\mathbb{R}^n$
with
\begin{equation}\label{eq:round}
\lfloor \mathbf{A}z^*\rfloor\le \mathbf{A}z'\le\lceil \mathbf{A}z^*\rceil,
\qquad 0\le z'\le1,
\qquad z'_j=z^*_j\ \text{for all forced }j .
\end{equation}
Its constraint matrix consists of $\mathbf{A}$, $-\mathbf{A}$, and rows of the identity matrix and its negative, and is therefore
again totally unimodular. All right-hand sides are integers, being floors and ceilings, the bounds $0$ and $1$, and forced coordinates of $z^*$, which are $0$ or $1$; and
the polytope is nonempty, as it contains $z^*$, and bounded.
Hence every vertex of the polytope is integral, and a vertex $z'\in\{0,1\}^n$ can be computed in polynomial
time by linear programming~\cite{Schrijver1986}. The labeling $z'$ is legal, since its forced coordinates agree with $z^*$, and for every
elementary interval $P$, either $(\mathbf{A}z^*)_P$ is an integer, in which case~\eqref{eq:round} forces $(\mathbf{A}z')_P=(\mathbf{A}z^*)_P$, or
$(\mathbf{A}z')_P$ is its floor or its ceiling, and in both cases
\[
\bigl|(\mathbf{A}z')_P-(\mathbf{A}z^*)_P\bigr|<1
\qquad\text{for every elementary interval }P.
\]
Consequently, for every admissible pair $(P,Q)$,
\[
Z_{z'}(P)+d(Q)-Z_{z'}(Q)\;<\;Z_{z^*}(P)+d(Q)-Z_{z^*}(Q)+2\;\le\;K^*+2\;\le\;\OPT+2 ,
\]
and since the left-hand side is an integer, it is at most $\OPT+1$. By \cref{lem:width}, $W(z')\le\OPT+1$, and
\cref{lem:chains} converts a minimum chain cover of $\prec_{z'}$ into a feasible schedule on at most $\OPT+1$
machines. All steps run in polynomial time, which completes the proof of \cref{thm:intro-approx}.

In summary, the algorithm checks the condition of \cref{obs:feasible}, solves the linear relaxation of~\eqref{eq:ip},
computes a vertex of the polytope~\eqref{eq:round}, and covers $\prec_{z'}$ by chains via bipartite matching. It
consists of two linear programs with $O(n)$ variables, having $O(n^2)$ and $O(n)$ constraints respectively, and one matching computation.

\section{Conclusion}\label{sec:conclusion}


We have shown that requiring a break of length $x$ on every used machine leaves fixed-interval scheduling
polynomial-time solvable for $x\le1$ but turns it into a strongly NP-hard problem for every fixed $x\ge2$.
Despite this hardness, a feasible schedule on at most $\OPT+1$ machines can be computed in polynomial time, and
this is the best guarantee possible unless $\mathrm{P}=\mathrm{NP}$.

Several questions remain. First, the additive-one algorithm solves a linear program with $\Theta(n^2)$
constraints, and it would be interesting to know whether a purely combinatorial algorithm achieves the same
guarantee. Second, one may ask for a characterization of the instances on which the width linear program has an integral
optimum, and for an explicit natural family of instances on which the rounding step incurs the additive loss.
Finally, natural extensions such as breaks restricted to given time windows or machine-dependent costs contain the
present problem as a special case and so inherit the hardness, but not obviously the additive-one guarantee.
Moreover, the normal form of \cref{sec:normal-form} breaks down as soon as a machine may rest twice.

\section*{Acknowledgements}
\aidisclosure

\clearpage
\bibliographystyle{plainurl}
\bibliography{refs}

\end{document}